\documentclass[Singlecolumn,         
               showpacs,            
               preprintnumbers,     
               aps,                 
               prd,          	    
               a4paper,             
               superscriptaddress,      
               nofootinbib,         
               tightenlines,        
               floats,floatfix,11pt
               ]{revtex4-2}              
\usepackage{graphicx}  
\usepackage[margin=0.8in]{geometry}
\usepackage{dcolumn}   
\usepackage{bm}  
\usepackage[normalem]{ulem}

\usepackage[colorlinks=true,
            linkcolor=blue,
            citecolor=blue,
            urlcolor=blue]{hyperref}
\usepackage[utf8]{inputenc}
\usepackage{amsmath,amssymb}
\usepackage{soul}
\usepackage{float}
\usepackage[thinlines]{easytable}
\usepackage{array,booktabs}
\usepackage{lipsum} 
\usepackage{multirow}
\usepackage{placeins} 
\usepackage{subcaption}
\newtheorem{theorem}{Theorem}
\newtheorem{corollary}{Corollary}
\newtheorem{proposition}{Proposition}

\newtheorem{lemma}{Lemma}
\usepackage{array,makecell}
\usepackage{cleveref} 
\hypersetup{
    hidelinks
}
\begin{document} 

\title{Global Dynamical Structure of Einstein–Scalar Cosmological Systems}

\author{Prasanta Sahoo}
\email{prasantmath123@yahoo.com}
\affiliation{
Midnapore College (Autonomous), 
Midnapore, West Bengal, India, 721101
}

\begin{abstract}
In this work, a global dynamical analysis of spatially flat Friedmann–Lema\^itre–Robertson–Walker cosmologies driven by a canonical scalar field minimally coupled to gravity is presented. Under suitable regularity and asymptotic assumptions on the scalar field potential, it is shown that the Einstein–scalar evolution admits no forward trajectory along which the potential steepness becomes asymptotically unbounded. This establishes forward boundedness of the scalar sector and yields the existence of a compact absorbing set for the induced cosmological flow. Using techniques from invariant manifold theory and dissipative dynamical systems, the evolution is shown to admit a compact global attractor governing the late–time dynamics of all physically admissible solutions. The asymptotic behavior is further characterized by convergence toward a scalar field dominated invariant manifold, leading to a reduction in effective dynamical dimensionality. In particular, the late–time dynamics is governed by at most two independent degrees of freedom, with further reduction to one dimension for asymptotically exponential potentials. The resulting asymptotic structure is shown to be normally hyperbolic and structurally stable under smooth perturbations of the scalar field potential. A topological classification of the asymptotic dynamics is obtained using the Conley index, identifying universality classes corresponding to one and two dimensional invariant sets. These results provide a global characterization of late–time scalar field cosmologies and establish a model independent dynamical mechanism for asymptotic trapping in Einstein–scalar systems.
\end{abstract}

\maketitle


\section{Introduction}

Scalar fields play a central role in contemporary theoretical cosmology, providing effective descriptions of early and late time accelerated expansion, dark energy phenomena, and a wide class of modified gravity theories \cite{Guth1981,Linde1982,AlbrechtSteinhardt1982,RatraPeebles1988,
CopelandSamiTsujikawa2006,Tsujikawa2013,CliftonFerreiraPadillaSkordis2012}. When minimally coupled to gravity in a spatially homogeneous FLRW spacetime, scalar field models give rise to nonlinear evolution equations whose qualitative properties govern the global dynamical behavior of cosmological solutions \cite{WainwrightEllis1997,Coley2003,Foster2007}. Consequently, the dynamical systems approach has become a standard tool in the analysis of scalar field cosmologies
\cite{CopelandLiddleWands1998,Hallietal2000,HeardWands2002,LeonSaridakis2011}. Most existing studies focus on identifying equilibrium configurations and analyzing their local stability properties for specific scalar field potentials \cite{CopelandSamiTsujikawa2006,NgCopeland2016,Leon2010}. While such analyses provide insight into possible asymptotic regimes, they are inherently local and depend sensitively on model assumptions. As a result, late--time cosmological behavior is typically inferred from isolated invariant sets associated with particular potentials rather than from structural properties of the Einstein--scalar evolution.

From a global dynamical perspective, the cosmological field equations define a constrained flow on a generally noncompact physical phase space \cite{Wiggins2003,Temam1997}. Although the density sector is bounded by the Hamiltonian constraint, the scalar field sector may dynamically access arbitrarily steep regions of the scalar field potential. The accessibility of these regimes plays a crucial role in determining whether late--time solutions exhibit scalar field domination, scaling behavior, or effective de Sitter expansion. However, global properties of the induced evolution, including forward completeness, asymptotic compactness, and invariant sets governing late--time dynamics, remain largely unexplored in cosmological applications \cite{GuckenheimerHolmes1983,HirschPughShub1977}. A central difficulty in global analyses arises from the possibility that the scalar potential steepness may become unbounded along cosmological trajectories. Even though the Hamiltonian constraint restricts the total energy density, the scalar sector may dynamically explore increasingly steep regions of the potential, potentially leading to qualitatively distinct asymptotic behavior. Understanding whether such regimes are dynamically accessible is therefore essential for characterizing late--time cosmological evolution.

In this work, a dynamical obstruction to runaway evolution in the scalar sector is established. Under suitable regularity and asymptotic assumptions on the scalar field potential, the Einstein--scalar cosmological evolution admits no forward trajectory along which the potential steepness becomes asymptotically unbounded. This forward boundedness dynamically confines trajectories to a compact invariant subset of the physical phase space, leading to a global characterization of late--time behavior. As a consequence, the Einstein--scalar evolution generates a dissipative flow admitting a compact invariant set that attracts all physically admissible trajectories. The asymptotic dynamics is therefore governed by global geometric features of the cosmological evolution rather than by local stability properties or model--specific assumptions. Previous studies have addressed related aspects of scalar field cosmologies. Early analyses identified scaling and scalar field dominated solutions for exponential potentials using local stability analysis \cite{Copeland1998}. Global dynamical properties for monotone scalar field potentials and compactified phase space formulations were investigated in \cite{Alho2015}. Accelerated expansion driven by scalar fields with potentials possessing a positive lower bound was studied in \cite{Rendall2004}. However, these works did not establish a model independent dynamical obstruction to runaway steepness evolution or the existence of a compact global attractor governing late--time cosmological dynamics. The present work provides such a global classification by establishing forward boundedness of the scalar sector and the existence of a compact global attractor under general structural assumptions on the scalar field potential.

The paper is organized as follows. Section~\ref{sec:model} introduces the cosmological model and presents the dynamical formulation of the field equations. In Section~\ref{sec:phasespace}, the physical phase space is constructed and the constraint structure is analyzed. Section~\ref{sec:invariant} reviews the relevant concepts from invariant manifold theory and adapts them to the cosmological setting. The assumptions imposed on the scalar field potential are stated and discussed in Section~\ref{sec:assumptions}. Section~\ref{sec:example_of_SF_Potential} presents representative examples of scalar field potentials satisfying assumptions~(A1)--(A6) and illustrates the physical relevance of the admissible class. Section~\ref{sec:Asymptotic_Compactness} establishes forward boundedness of the scalar sector and the resulting asymptotic compactness of the induced evolution. The existence and uniqueness of a compact invariant set governing the late--time dynamics are established in Section~\ref{sec:global}. Section~\ref{sec:dimension} establishes dimensional reduction of the asymptotic invariant set. Subsections~\ref{sec:dimension:exp},~\ref{sec:nhim}, and~\ref{sec:conley} further analyze the invariant set, including the special case of one dimensional asymptotic dynamics associated with asymptotically exponential potentials, its normal hyperbolicity, and its Conley index, respectively. Section~\ref{sec:structure} discusses the internal structure and robustness of the asymptotic dynamics, while Section~\ref{sec:nohair} relates the results to cosmic no--hair behavior and inflationary attractors. Possible extensions and generalizations are outlined in Section~\ref{sec:extensions}, and Section~\ref{sec:conclusion} summarizes the main conclusions.

\section{Cosmological Model and Dynamical Formulation}
\label{sec:model}

Consider the Einstein equations minimally coupled to a canonical scalar field with self–interaction potential $V(\phi)$ and additional perfect fluid matter sources on a spatially flat FLRW spacetime with line element $ds^{2} = -dt^{2} + a(t)^{2}\,\delta_{ij}\,dx^{i}dx^{j}$, where $a(t)$ denotes the scale factor and $t$ is cosmic time. 
The scalar field evolves according to the Klein–Gordon equation on the FLRW background, while the perfect fluid components obey barotropic equations of state and interact with the scalar field only through gravitation. Under these assumptions, the Einstein field equations reduce to a coupled evolution system for the variables $(a,\phi,\dot\phi)$ supplemented by the Friedmann constraint equation \cite{Wald1984,EllisWainwright1997}.

Introducing the Hubble parameter $H=\dot a/a$, the resulting Einstein–scalar–fluid evolution defines a constrained dynamical system in which the Friedmann relation determines an invariant hypersurface in the corresponding state space \cite{HewittWainwright1993,GoliathEllis1999}. Consequently, physically admissible solutions evolve on a constraint manifold determined by the Hamiltonian constraint and positivity of the energy densities.

To characterize the interaction between the scalar field and its self–interaction potential in a manner independent of normalization, introduce the scalar potential steepness observable
\begin{equation}\label{eq:Sdef}
\mathcal{S}(\phi) := \frac{|V'(\phi)|}{V(\phi)}.
\end{equation}
This quantity depends only on the scalar field configuration and the potential and provides an intrinsic measure of the local steepness of the potential along cosmological trajectories. The qualitative behavior of the Einstein–scalar evolution depends on the dynamical accessibility of regimes in which the potential becomes effectively steep, corresponding to large values of the observable $\mathcal{S}$. The resulting cosmological dynamics therefore depends not only on the bounded density sector governed by the Friedmann constraint but also on the evolution of the scalar sector through the potential steepness.

For subsequent analysis it is convenient to introduce the associated curvature parameter of the scalar field potential
\begin{equation}\label{eq:Gamma}
\Gamma(\phi) := \frac{V(\phi)V''(\phi)}{V'(\phi)^{2}}.
\end{equation}
Under suitable regularity assumptions on the potential, imposed in later sections, the evolution of the scalar sector may be expressed in terms of the quantities $\mathcal{S}$ and $\Gamma$, permitting a geometric formulation of the cosmological flow on the physical constraint manifold \cite{Rendall2004,Alho2016}. The scalar field potential $V(\phi)$ is assumed to be twice continuously differentiable on its domain. Additional asymptotic conditions imposed below ensure regularity of the induced evolution and provide dynamical control of the scalar sector in regimes where the potential becomes steep. These conditions exclude pathological behavior while remaining sufficiently general to encompass a broad class of scalar field models, including quintessence and scaling scenarios.

Within this formulation, the cosmological field equations generate a constrained evolution on the physical phase space whose qualitative properties determine the global dynamical behavior of cosmological solutions. In particular, the late–time evolution is governed by the interaction between the density sector constrained by the Friedmann relation and the scalar potential steepness observable $\mathcal{S}$, whose dynamical behavior plays a central role in the global asymptotic structure analyzed in the subsequent sections \cite{SellYou2002,Temam1997}.

\section{Phase Space, Constraints, and Invariant Structure}
\label{sec:phasespace}

The cosmological evolution generated by the Einstein equations minimally coupled to a scalar field is restricted by algebraic and physical constraints arising from the Hamiltonian formulation of general relativity. In particular, the Friedmann equation defines a first integral of the Einstein–scalar evolution, confining the dynamics to a constraint surface determined by the Hamiltonian constraint. Physical admissibility further requires nonnegativity of the energy densities associated with the scalar field and the additional matter components \cite{Wald1984,EllisWainwright1997}. The physical phase space is therefore defined as the intersection of these constraint conditions, yielding a reduced state space containing all admissible cosmological configurations.

The physical phase space is forward invariant under the flow generated by the Einstein–scalar–fluid evolution equations. Indeed, if the Friedmann constraint and positivity conditions are satisfied by the initial data, they remain satisfied throughout the interval of existence of the corresponding solution \cite{HewittWainwright1993,GoliathEllis1999}. This invariance ensures that the evolution is well defined when restricted to the physical phase space and that trajectories do not enter unphysical regions during cosmological evolution. Within this constrained formulation, the density parameters associated with the scalar field, matter, and radiation components remain bounded by the Friedmann relation. However, the scalar sector may dynamically access regimes in which the potential becomes effectively steep, corresponding to large values of the steepness observable introduced in Section~\ref{sec:model} by Eq.~((\ref{eq:Sdef}). The physical phase space is therefore generally noncompact in directions associated with increasing potential steepness.

The presence of these noncompact directions reflects the possibility that cosmological trajectories may dynamically explore increasingly steep regions of the scalar field potential. Such behavior corresponds to asymptotic growth of the potential steepness observable and represents a potential mechanism by which late–time cosmological solutions may transition between distinct asymptotic regimes. A global description of the Einstein–scalar evolution therefore requires control of the scalar sector in regimes where $\mathcal{S}$ becomes large. To analyze the asymptotic structure of the flow, it is convenient to introduce a bounded geometric observable associated with the potential steepness. Define
\begin{equation}\label{eq:Stilde}
\widetilde{\mathcal{S}}
:=
\frac{\mathcal{S}}{\sqrt{1+\mathcal{S}^{2}}}.
\end{equation}
This quantity maps the unbounded range $\mathcal{S}\in[0,\infty)$ into the compact interval $[0,1)$ and depends smoothly on the scalar field configuration. In terms of the observable $\widetilde{\mathcal{S}}$, the physical phase space may be viewed as a bounded subset of a finite–dimensional coordinate domain augmented by a boundary corresponding to asymptotic regimes in which the potential steepness diverges. Under the regularity assumptions imposed on the scalar field potential together with the asymptotic closure condition introduced in Section~\ref{sec:model}, the Einstein–scalar evolution induces a locally Lipschitz vector field in the variable $\widetilde{\mathcal{S}}$ which extends continuously to the boundary $\widetilde{\mathcal{S}}=1$ \cite{Rendall2005,Alho2015}.

This formulation provides a unified geometric setting in which both interior dynamics and asymptotic regimes associated with steep scalar field potentials can be analyzed within a single forward invariant state space. In particular, the accessibility of asymptotic steep–potential regimes becomes a dynamical property of the induced evolution on the physical constraint manifold rather than a formal feature of an unbounded coordinate domain. The global asymptotic behavior of cosmological solutions may thus be characterized in terms of invariant subsets of this bounded geometric phase space without reliance on normalization–dependent formulations.

\begin{lemma}[Steepness Compactification]
Let the scalar potential steepness observable be defined by $\mathcal{S}(\phi)=\frac{|V'(\phi)|}{V(\phi)}$. Define the bounded observable $\widetilde{\mathcal{S}}$ by $\widetilde{\mathcal{S}}=\frac{\mathcal{S}}{\sqrt{1+\mathcal{S}^{2}}}$. Then the map $\chi:[0,\infty)\to[0,1),\quad \chi(\mathcal{S})=\widetilde{\mathcal{S}}$ is a $C^{\infty}$ diffeomorphism with inverse $\mathcal{S}=\frac{\widetilde{\mathcal{S}}}{\sqrt{1-\widetilde{\mathcal{S}}^{2}}}$. Moreover, under assumptions {\rm (A1)--(A4)} the Einstein–scalar evolution induces a locally Lipschitz vector field in the compactified observable $\widetilde{\mathcal{S}}$ which extends continuously to the boundary $\widetilde{\mathcal{S}}=1$. Consequently, the induced cosmological evolution generates a continuous semiflow on the bounded physical phase space.
\end{lemma}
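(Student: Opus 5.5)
The plan has three parts: the elementary diffeomorphism claim, the derivation of the induced vector field, and the passage to a semiflow.

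\textbf{The diffeomorphism claim.} Since $\chi(\mathcal{S})=\mathcal{S}(1+\mathcal{S}^2)^{-1/2}$ is a composition of smooth functions on $[0,\infty)$, it is $C^\infty$ there (one-sided at $0$). Its derivative is
\begin{equation*}
\chi'(\mathcal{S})=(1+\mathcal{S}^2)^{-3/2}>0,
\end{equation*}
so $\chi$ is strictly increasing. Moreover $\chi(0)=0$ and $\chi(\mathcal{S})\to1$ as $\mathcal{S}\to\infty$, so $\chi$ is a bijection onto $[0,1)$. Squaring $\widetilde{\mathcal{S}}^2(1+\mathcal{S}^2)=\mathcal{S}^2$ and solving for $\mathcal{S}\ge0$ gives the stated inverse $\widetilde{\mathcal{S}}(1-\widetilde{\mathcal{S}}^2)^{-1/2}$. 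This inverse is smooth on $[0,1)$ because $1-\widetilde{\mathcal{S}}^2>0$ there. Hence $\chi$ is a $C^\infty$ diffeomorphism; equivalently, this follows from the inverse function theorem, since $\chi'$ never vanishes.

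\textbf{The induced vector field.} Along solutions, on the open set where $V'\neq0$, differentiate the steepness observable. With $\lambda=-V'/V$, a direct computation gives
\begin{equation*}
\frac{d\lambda}{dt}=-\dot\phi\,\lambda^2(\Gamma-1),
\end{equation*}
so $\dot{\mathcal{S}}=\mp\dot\phi\,\mathcal{S}^2(\Gamma-1)$. Passing to the dimensionless time $N=\ln a$ and the normalized kinetic variable $x=\dot\phi/(\sqrt6 H)$, which is bounded by the Friedmann constraint, this becomes
\begin{equation*}
\mathcal{S}'=\mp\sqrt6\,x\,\mathcal{S}^2(\Gamma-1).
\end{equation*}
Changing variables by the chain rule, using $d\widetilde{\mathcal{S}}/d\mathcal{S}=(1-\widetilde{\mathcal{S}}^2)^{3/2}$ and $\mathcal{S}^2=\widetilde{\mathcal{S}}^2/(1-\widetilde{\mathcal{S}}^2)$, yields
\begin{equation*}
\widetilde{\mathcal{S}}'=\mp\sqrt6\,x\,\widetilde{\mathcal{S}}^2(1-\widetilde{\mathcal{S}}^2)^{1/2}\,(\Gamma-1).
\end{equation*}
The remaining equations for $x$, $y$ and the fluid densities involve $\mathcal{S}$ only through products such as $\mathcal{S}y^2$. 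They would be handled in the same way, presumably via the constraint or the normalization chosen in (A1)--(A4).

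\textbf{Continuity at the boundary and Lipschitz regularity (the main obstacle).} I would interpret (A1)--(A4) as providing three things: $V\in C^2$ with $V>0$; the existence of $\Gamma$ as a function of $\mathcal{S}$ near infinity; and the asymptotic closure condition that $(\Gamma-1)\sqrt{1-\widetilde{\mathcal{S}}^2}$ has a finite continuous limit as $\widetilde{\mathcal{S}}\to1$. The natural special case is $\Gamma\to1$, as for asymptotically exponential potentials, or $\Gamma-1=O(\mathcal{S}^{-1})$. Under this condition the right-hand side extends continuously to $\widetilde{\mathcal{S}}=1$. Local Lipschitz continuity in the interior follows because all coefficients are $C^1$ in $(x,y,\widetilde{\mathcal{S}})$ whenever $\Gamma(\widetilde{\mathcal{S}})$ is $C^1$; this requires expressing $\phi$ as a local function of $\mathcal{S}$, i.e. monotonicity of $\mathcal{S}$ in the tail, which is where (A3)--(A4) enter. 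Points with $V'=0$ need separate care, since there $|\cdot|$ is nonsmooth. Near them I would work with the signed variable $\lambda$ instead of $\mathcal{S}$.

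\textbf{The semiflow.} With a continuous vector field on the closed bounded set, locally Lipschitz in its interior, Picard--Lindel\"of gives local unique solutions. Forward invariance of the physical phase space and its boundedness, via the constraint and $\widetilde{\mathcal{S}}\in[0,1]$, exclude blow-up, so solutions are global. Continuous dependence on initial data then yields a continuous semiflow. At the boundary $\widetilde{\mathcal{S}}=1$ uniqueness may fail unless the closure condition gives Lipschitz control there, so I would assume the limiting field is Lipschitz. Alternatively, I would note that the boundary is invariant, because $\widetilde{\mathcal{S}}'$ vanishes there whenever $\Gamma-1=o\bigl((1-\widetilde{\mathcal{S}}^2)^{-1/2}\bigr)$.
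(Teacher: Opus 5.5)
Your proof that $\chi$ is a $C^\infty$ diffeomorphism is complete: $\chi'=(1+\mathcal S^2)^{-3/2}>0$ and the inverse is explicit and smooth on $[0,1)$. Your formulas $\dot\lambda=-\dot\phi\,\lambda^2(\Gamma-1)$ and $\widetilde{\mathcal S}'=\mp\sqrt6\,x\,\widetilde{\mathcal S}^2(1-\widetilde{\mathcal S}^2)^{1/2}(\Gamma-1)$ are also correct. The paper gives no argument for the rest of the lemma; it only asserts it, citing Rendall and Alho and pointing to (A4).

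There is a genuine gap in the boundary step. You do not derive the boundary behaviour from (A1)--(A4); you \emph{assume} it when you read them as saying that $(\Gamma-1)\sqrt{1-\widetilde{\mathcal S}^2}$ has a limit.
\begin{itemize}
\item (A4) only asks that $\Gamma$ be some $C^1$ function of $\lambda$, and nothing limits its growth. For example, $\Gamma(\lambda)=1+\lambda^2$ is allowed and gives $|\widetilde{\mathcal S}'|\sim\sqrt6\,|x|\,\mathcal S\to\infty$.
\item More basically, (A2)--(A4) are conditions as $|\phi|\to\infty$. By (A2) that regime maps to $\widetilde{\mathcal S}\to\chi(|\lambda_\infty|)<1$, not to the boundary. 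Since $V\in C^2$, $\mathcal S$ is continuous wherever $V>0$, so $\widetilde{\mathcal S}\to1$ happens only near zeros of $V$ at finite $\phi$. There $\Gamma$ need not be a function of $\lambda$ and the assumptions say nothing. Your remark that (A3)--(A4) ``enter in the tail'' therefore points at the wrong region.
\item The other components are not ``handled in the same way''. Only the $\widetilde{\mathcal S}$-component picks up the factor $(1-\widetilde{\mathcal S}^2)^{3/2}$. The Klein--Gordon term $\sqrt{3/2}\,\lambda y^2=\pm\sqrt{3/2}\,\widetilde{\mathcal S}(1-\widetilde{\mathcal S}^2)^{-1/2}y^2$ in the $x$-equation is unbounded as $\widetilde{\mathcal S}\to1$ at fixed $y\neq0$. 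The Friedmann constraint does not tie $y$ to $\widetilde{\mathcal S}$, so you would need a time reparametrization, which in turn changes the growth condition required of $\Gamma$.
\end{itemize}

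The semiflow step fails as proposed. Take $V=\phi^2$, which satisfies (A1)--(A4) with $\lambda_\infty=0$ and $\Gamma\equiv\tfrac12$. Your $o$-condition holds, and a direct computation gives $\widetilde{\mathcal S}'=\tfrac{\sqrt6}{2}\,\mathrm{sgn}(\lambda)\,x\,\widetilde{\mathcal S}^2\sqrt{1-\widetilde{\mathcal S}^2}$, which vanishes on $\widetilde{\mathcal S}=1$ but is only H\"older there. As $\phi$ crosses the minimum, $\widetilde{\mathcal S}$ reaches $1$ in finite time. It then leaves again, because $\lambda$ passes through $\pm\infty$ and $\mathrm{sgn}(\lambda)$ flips. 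So the boundary is not invariant and uniqueness fails there; the sign in your equation is not even a function of $(x,y,\widetilde{\mathcal S})$. A non-Lipschitz field vanishing on a boundary gives neither invariance nor uniqueness.

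Your argument therefore proves the second half of the lemma only under extra hypotheses you would have to state explicitly:
\begin{itemize}
\item a growth bound on $\Gamma(\lambda)$ as $|\lambda|\to\infty$;
\item a regularization of the $\lambda y^2$ terms;
\item Lipschitz control at $\widetilde{\mathcal S}=1$.
\end{itemize}
The last condition effectively excludes potentials with zeros. When $V>0$ everywhere, (A1)--(A2) already bound $\mathcal S$, so the boundary is never approached. The paper supplies none of this, so what you flagged as ``the main obstacle'' is a gap in the statement itself, not one its proof resolves.
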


\section{Invariant Manifolds and Global Asymptotic Structure}
\label{sec:invariant}

The Einstein–scalar cosmological evolution admits invariant subsets corresponding to dynamically distinguished regimes determined by the relative contributions of the scalar field and additional matter components. Such subsets arise from algebraic relations among the energy densities and define smooth invariant submanifolds of the physical constraint manifold \cite{Copeland1998,EllisWainwright1997}. Invariance of these submanifolds follows from the fact that the vector field generating the Einstein–scalar flow is tangent to them. Consequently, initial data lying in a given invariant subset evolve within that subset for all forward times. These invariant submanifolds provide a natural decomposition of the physical phase space and play an organizing role in the qualitative analysis of the constrained cosmological evolution \cite{HirschSmaleDevaney2013}.

To analyze the asymptotic behavior of the cosmological system, introduce the functional
\begin{equation}\label{eq:L}
\mathcal L := \Omega_m + \frac{4}{3}\Omega_r,
\end{equation}
where $\Omega_m$ and $\Omega_r$ denote the normalized matter and radiation energy densities, respectively. Using the conservation equations for the matter and radiation components, one obtains
\begin{equation}\label{eq:Lprime}
\frac{d}{dt}\mathcal L = -3H\,\Omega_m - 4H\,\Omega_r \le 0,
\end{equation}
with equality if and only if $\Omega_m=\Omega_r=0$. Thus $\mathcal L$ defines a continuous Lyapunov functional for the induced evolution on the physical phase space.

The set of stationary points of $\mathcal L$ is given by $\mathcal M_\infty := \{\Omega_m = 0,\; \Omega_r = 0\}$. The monotonic decay of $\mathcal L$ along trajectories implies that the matter and radiation components are dynamically depleted along every forward orbit of the Einstein–scalar flow. In particular, the evolution drives solutions toward the invariant subset $\mathcal M_\infty$ in the sense that any $\omega$–limit set must lie in the set where $d\mathcal L/dt$ vanishes, provided that the corresponding trajectory admits a precompact forward orbit. Within the geometric formulation introduced in Section~\ref{sec:phasespace}, this reduction identifies $\mathcal M_\infty$ as the asymptotic invariant manifold governing the late–time dynamics of the Einstein–scalar cosmological system. The global evolution therefore exhibits an intrinsic tendency toward configurations in which the scalar field dominates the energy budget, independently of the detailed functional form of the scalar field potential.

The existence of a compact absorbing set and the associated asymptotic compactness of the induced evolution, required to ensure precompactness of forward trajectories, are established in Section~\ref{sec:Asymptotic_Compactness}. Under these conditions, LaSalle’s invariance principle implies that the $\omega$–limit set of every physically admissible trajectory is contained in the largest invariant subset of $\mathcal M_\infty$ \cite{LaSalle1960,Temam1997}.

Consequently, once asymptotic compactness has been established, the late–time behavior of cosmological solutions is determined by invariant structures contained within the asymptotic manifold $\mathcal M_\infty$. The existence of an invariant set governing this asymptotic dynamics is established in Section~\ref{sec:global}. Scalar–field dominated and scaling regimes therefore arise as invariant dynamical features of the global Einstein–scalar flow rather than as consequences of local stability analysis for specific potentials.

\section{Assumptions on the Scalar Field Potential}
\label{sec:assumptions}

The scalar field potential is taken to be a real valued function $V:\mathbb{R}\to\mathbb{R}$ satisfying a set of regularity and asymptotic conditions that ensure well definedness of the Einstein–scalar evolution and impose dynamical control on the scalar sector in regimes where the potential becomes steep. These assumptions are stated as follows.

\begin{itemize}
\item[(A1)] \emph{Smoothness and non-negativity.}
The potential satisfies $V \in C^2(\mathbb{R}), \quad V(\phi)\ge 0 \quad \text{for all } \phi\in\mathbb{R}$. This assumption guarantees regularity of the induced Einstein–scalar evolution and non-negativity of the associated scalar field energy density.

\item[(A2)] \emph{Asymptotic behavior at infinity.}
There exists a constant $\lambda_\infty\in\mathbb{R}$ such that $\displaystyle{\lim_{\phi\to\pm\infty}}\frac{V'(\phi)}{V(\phi)}=\lambda_\infty$, and such that $V(\phi)$ is strictly positive for all sufficiently large $|\phi|$.

\item[(A3)] \emph{Uniform control of curvature.}
There exists a constant $C>0$ such that $\left|\frac{V''(\phi)}{V(\phi)}\right|\le C$ for all sufficiently large $|\phi|$.

\item[(A4)] \emph{Asymptotic closure condition.}
Define the potential steepness and curvature parameters by 
\[
\lambda(\phi):=-\frac{V'(\phi)}{V(\phi)},\qquad
\Gamma(\phi):=\frac{V(\phi)V''(\phi)}{V'(\phi)^2}.
\]
Assume that there exists a continuously differentiable function $\Gamma:\mathbb{R}\to\mathbb{R}$ such that, for sufficiently large $|\phi|$, $\Gamma(\phi)=\Gamma(\lambda(\phi))$.
\end{itemize}

Assumptions (A1)–(A4) ensure that the scalar potential steepness observable introduced in Section~\ref{sec:model} remains sufficiently regular for the purposes of geometric evolution analysis and that the curvature properties of the potential admit an asymptotic representation in terms of the steepness parameter.

To impose dynamical control on the scalar sector in regimes corresponding to large potential steepness, additional structural conditions are required.

\begin{itemize}
\item[(A5)] \emph{Asymptotic steepness dissipativity.}
There exist constants $R>0$ and $c>0$ such that for all
$|\lambda|>R$ one has $(\lambda-\lambda_{\infty})\,\lambda^{2}
\big(\Gamma(\lambda)-1\big)
\ge c\,|\lambda|^{3}$.

\item[(A6)] \emph{Asymptotic alignment condition.}
There exists $R>0$ such that whenever
$|\lambda|>R$, the scalar kinetic variable satisfies $\dot\phi\,(\lambda-\lambda_\infty)\ge 0$.
\end{itemize}

Assumptions (A5) and (A6) impose a dissipative structure on the evolution of the scalar potential steepness in regimes where $|\lambda|$ becomes large. In particular, these conditions ensure that whenever the scalar field explores sufficiently steep regions of the potential, the Einstein–scalar dynamics acts to drive the steepness parameter toward the asymptotic limiting value $\lambda=\lambda_{\infty}$. Consequently, configurations for which $|\lambda|$ becomes sufficiently large are dynamically redirected toward a bounded neighborhood of the asymptotic regime. Assumption \textnormal{(A6)} imposes an alignment condition between the scalar field velocity and the asymptotic steepness parameter. Physically, this condition requires that when the scalar field explores regions of the potential where the steepness parameter deviates significantly from its asymptotic value, the scalar field evolution proceeds in a direction that dynamically reduces this deviation. Such behavior naturally arises in expanding cosmological solutions. The Klein--Gordon equation for the scalar field in an expanding FLRW spacetime is given by

\begin{equation}
\ddot{\phi} + 3H\dot{\phi} + V'(\phi) = 0 .
\end{equation}

The presence of the Hubble friction term $3H\dot{\phi}$ tends to damp the scalar field motion and drives the system toward attractor behavior. In particular, when the scalar field enters regions of steep potential, the friction term suppresses rapid evolution and favors motion toward regions where the effective steepness decreases. This mechanism is well known in quintessence and inflationary dynamics, where scalar field trajectories generically approach slow roll or scaling solutions independently of initial conditions. Consequently, the alignment condition imposed in \textnormal{(A6)} is consistent with the attractor behavior commonly observed in scalar field cosmologies. Furthermore, the examples presented in Section VI, including exponential, inverse power law, axion like, and hyperbolic potentials, satisfy this condition for expanding solutions. These potentials are widely used in dark energy and inflationary cosmology, indicating that assumption \textnormal{(A6)} holds for a broad class of physically relevant models. Therefore, assumption \textnormal{(A6)} may be interpreted as a dynamical condition reflecting the generic attractor behavior of scalar field cosmologies rather than a restrictive constraint on the scalar field potential. This mechanism therefore introduces a dynamical obstruction to runaway evolution in the scalar sector. That is, although the scalar field potential may admit arbitrarily steep regions, the Einstein–scalar cosmological evolution does not admit forward trajectories along which the potential steepness observable grows without bound.

Under assumptions (A1)–(A6), the Einstein–scalar evolution induces a locally Lipschitz continuous flow on the physical constraint manifold. These conditions provide the analytic hypotheses required to establish forward boundedness of the scalar sector and ensure that the induced evolution admits a compact absorbing set, permitting the global asymptotic analysis developed in the subsequent sections \cite{Rendall2004,Alho2016}.

\section{Examples of Scalar Field Potentials Satisfying (A1)–(A6)}\label{sec:example_of_SF_Potential}

The structural assumptions imposed on the scalar field potential encompass a broad class of models commonly used in dark energy and inflationary cosmology. Several representative examples satisfying assumptions (A1)–(A6). The exponential potential $V(\phi) = V_0 e^{-\lambda \phi}$. For this potential, $\lambda(\phi) = -\frac{V'(\phi)}{V(\phi)} = \lambda$, and $\Gamma(\phi) = 1$. Hence the steepness parameter is constant and assumptions (A1)–(A4) are trivially satisfied. Since the steepness does not grow without bound, assumptions (A5) and (A6) hold automatically. Therefore exponential potentials lie within the admissible class. These potentials correspond to scaling solutions and are widely used in quintessence cosmology. Again for the inverse power law potential $V(\phi) = V_0 \phi^{-n}, \quad n > 0$, the steepness parameter becomes $\lambda(\phi) = \frac{n}{\phi}$, which approaches zero as $\phi \rightarrow \infty$. Furthermore, $\Gamma(\phi) = 1 + \frac{1}{n}$. The steepness parameter remains bounded and assumptions (A1)–(A6) are satisfied. These potentials correspond to tracking quintessence models frequently used in dark energy studies. Once again consider $V(\phi) = V_0 \left(1 + \cos\left(\frac{\phi}{f}\right)\right)$. This potential is bounded and smooth. The steepness parameter is bounded and periodic, $\lambda(\phi) = -\frac{V'(\phi)}{V(\phi)}$. Hence assumptions (A1)–(A6) are satisfied. These potentials arise in axion dark energy and ultra light scalar field models. Also the hyperbolic potential $V(\phi) = V_0 \cosh(\lambda \phi)$ behaves asymptotically as an exponential potential for large $|\phi|$ and therefore satisfies assumptions (A1)–(A6). Such potentials arise in thawing dark energy models. These examples demonstrate that the admissible class includes a wide range of scalar field models used in cosmology.

\section{Asymptotic Compactness of the Einstein--Scalar Evolution}
\label{sec:Asymptotic_Compactness}

A global characterization of the late--time dynamics of the Einstein--scalar cosmological system requires control of the scalar sector in forward time. In particular, the existence of an invariant set governing the asymptotic evolution depends on whether the scalar potential slope functional introduced in Section~\ref{sec:model} remains dynamically accessible to arbitrarily large values along expanding solutions.

\begin{proposition}[Forward Completeness]
Under assumptions {\rm (A1)--(A6)}, the Einstein--scalar cosmological evolution generates a forward complete continuous flow on the physical constraint manifold.
\end{proposition}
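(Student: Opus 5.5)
The plan is to use the standard continuation argument for ODEs. The dynamics is written in dimensionless variables on the compactified physical phase space of the Steepness Compactification Lemma. Global existence then follows once we show that no trajectory can leave every compact subset in finite time.

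\emph{Local well-posedness.} By (A1), $V\in C^2$, so the Einstein--scalar--fluid system in the variables $(\phi,\dot\phi,H,\rho_m,\rho_r)$ has a $C^1$, hence locally Lipschitz, right-hand side. Picard--Lindel\"of gives local existence and uniqueness. For the normalized description we use the expansion-normalized variables $x=\dot\phi/(\sqrt6 H)$, $y=\sqrt{V}/(\sqrt3 H)$, $\Omega_m$, $\Omega_r$, together with the compactified steepness $\widetilde{\mathcal S}$, and the time variable $N=\ln a$. By the Steepness Compactification Lemma, (A1)--(A4) make the induced vector field locally Lipschitz and continuous up to the boundary $\widetilde{\mathcal S}=1$. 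Invariance of the physical phase space (Section~\ref{sec:phasespace}) keeps the solution inside the constraint set. The reparametrization $dN=H\,dt$ is legitimate because, for expanding data, $H>0$ is preserved: $\dot H=-\tfrac12(\dot\phi^2+\rho_m+\tfrac43\rho_r)$ never lets $H$ vanish while the Friedmann constraint holds with positive total density.

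\emph{Boundedness of the normalized variables.} The Friedmann constraint $x^2+y^2+\Omega_m+\Omega_r=1$ with nonnegative densities (A1) confines $(x,y,\Omega_m,\Omega_r)$ to a compact set. So the only possible noncompact direction is the steepness sector. Here we use (A5)--(A6). For $|\lambda|>R$, differentiate $\tfrac12(\lambda-\lambda_\infty)^2$ along the flow, using $\lambda'=-\sqrt6\,x\,\lambda^2(\Gamma(\lambda)-1)$ with $\Gamma$ given by (A4). This gives
\[
\frac{d}{dN}\tfrac12(\lambda-\lambda_\infty)^2=-\sqrt6\,x\,(\lambda-\lambda_\infty)\lambda^2(\Gamma-1).
\]
By (A6), $x(\lambda-\lambda_\infty)\ge0$, and by (A5), $(\lambda-\lambda_\infty)\lambda^2(\Gamma-1)\ge c|\lambda|^3>0$. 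Combining the sign information shows that the right-hand side is nonpositive. Hence $|\lambda-\lambda_\infty|$ is nonincreasing whenever $|\lambda|>R$, so along every trajectory
\[
|\lambda(N)|\le\max\{R',|\lambda(0)|\}
\]
for a suitable constant $R'$ depending on $R$ and $\lambda_\infty$. It follows that $\widetilde{\mathcal S}$ stays in a compact subinterval of $[0,1)$. Even without this, the continuous extension to the compact closure $\{\widetilde{\mathcal S}\le1\}$ already excludes escape to infinity in finite $N$.

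\emph{Continuation and translation to cosmic time.} Since the normalized orbit stays in a compact set on which the vector field is bounded and Lipschitz, the maximal solution cannot blow up. The continuation theorem then gives existence for all $N\ge0$. To return to cosmic time, we bound $|\dot H|\le \tfrac32 H^2\cdot 2$ using $-\dot H/H^2=\tfrac32(1+w_{\rm eff})\in[0,3]$. This gives $H(t)\ge H_0/(1+3H_0t)$, so $N(t)=\int H\,dt\to\infty$ only as $t\to\infty$. Hence forward completeness in $N$ is equivalent to forward completeness in $t$, and $H$ stays positive and finite. Continuous dependence on initial data, from the Lipschitz property, makes the forward flow a continuous semiflow.

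\emph{Main obstacle.} The delicate step is the steepness sector. (A6) is phrased as a condition on $\dot\phi$ rather than on the potential alone, so its use must be restricted to the expanding branch, where $\operatorname{sign}\dot\phi=\operatorname{sign}x$. The sign conventions in the $\lambda'$ equation also have to be checked carefully against (A5), and at points where $V$ or $V'$ vanishes $\lambda$ and $\Gamma$ are undefined. If those sign checks do not close cleanly, the first fallback is to rely solely on the continuous extension to the compactified boundary $\widetilde{\mathcal S}=1$ from the Lemma. That extension already gives a compact state space and therefore completeness, with (A5)--(A6) needed only later for the absorbing-set argument. Points with $V=0$ can be handled in the unnormalized variables $(\phi,\dot\phi)$, where the right-hand side remains $C^1$.
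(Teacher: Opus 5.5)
Your architecture matches the paper's proof: Picard--Lindel\"of, the Friedmann bound on the densities, a bound on the steepness, then continuation. The step meant to bound the steepness, however, is wrong. In
\[
\frac{d}{dN}\tfrac12(\lambda-\lambda_\infty)^2=-\sqrt6\,\bigl[x(\lambda-\lambda_\infty)\bigr]\,\lambda^2(\Gamma-1)
\]
the factor $(\lambda-\lambda_\infty)$ appears once, yet you use it twice. Given (A6), the bracket is $\ge0$, so the right-hand side is nonpositive only when $\Gamma\ge1$. But (A5) only gives $\operatorname{sign}(\Gamma-1)=\operatorname{sign}(\lambda-\lambda_\infty)$. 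On the branch $\lambda<\min\{-R,\lambda_\infty\}$, the two hypotheses give $x\le0$ and $\Gamma-1\le -c|\lambda|/|\lambda-\lambda_\infty|\le -c/2$ (for $|\lambda|\ge|\lambda_\infty|$). Hence $\lambda'\le-\tfrac{\sqrt6\,c}{2}\,|x|\,\lambda^2$, so $\lambda$ is driven away from $\lambda_\infty$ at a Riccati rate, and the bound $|\lambda(N)|\le\max\{R',|\lambda(0)|\}$ does not follow. Citing Theorem~\ref{th:forward_boundedness}, as the paper does, will not rescue this. That proof writes $\dot{\mathcal E}=-3H(\lambda-\lambda_\infty)\lambda^2(\Gamma-1)$, with $3H$ in place of the kinetic factor $\dot\phi$ that the chain rule produces and that you correctly kept.

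Your fallback does not close the gap either. Continuity of the extended field at $\widetilde{\mathcal S}=1$ rules out escape to infinity in the compactified chart. It does not rule out reaching $\widetilde{\mathcal S}=1$ in finite $N$, where that chart degenerates. Indeed $\widetilde{\mathcal S}'=-\operatorname{sign}(\lambda)\sqrt6\,x(\Gamma-1)\,\widetilde{\mathcal S}^{2}\sqrt{1-\widetilde{\mathcal S}^{2}}$. Since (A5) keeps $|\Gamma-1|$ bounded below, this field is only H\"older-$\tfrac12$ at the boundary, and with the sign configuration above it can reach the boundary in finite time. Physically, this is $\phi$ running into a zero of $V$: $|V'/V|\to\infty$ there, while $(\phi,\dot\phi)$ stays perfectly regular. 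Your closing remark moves to $(\phi,\dot\phi)$ at such points, but that gives only local well-posedness, not a global bound.

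That global bound is the missing idea, and it makes the steepness irrelevant here. In the variables $(\phi,\dot\phi,H,\rho_m,\rho_r)$:
\begin{itemize}
\item the right-hand side is $C^1$ by (A1);
\item the constraint $3H^2=\tfrac12\dot\phi^2+V+\rho_m+\rho_r$ is preserved;
\item $\dot H\le0$ together with $V,\rho_m,\rho_r\ge0$ gives $\tfrac12\dot\phi^2,\ \rho_m,\ \rho_r\le3H_0^2$ and $0<H\le H_0$;
\item hence $|\phi(t)-\phi_0|\le\sqrt6\,H_0\,t$ (equivalently, $\phi'=\sqrt6\,x$ with $|x|\le1$ in $N$-time).
\end{itemize}
On every finite time window the trajectory therefore stays in a compact set. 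Continuation then gives existence for all $t\ge0$, zeros of $V$ included, using (A1) alone. Assumptions (A2)--(A6) would be needed, if at all, only for the later absorbing-set statements.
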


\medskip

\noindent\textit{Proof.}
The Einstein--scalar evolution equations define a locally Lipschitz vector field on the physical phase space determined by the Hamiltonian constraint. Local existence and uniqueness therefore follow from the standard Picard--Lindel\"of theorem. By the Friedmann constraint, the density parameters associated with the scalar field, matter, and radiation components evolve in a bounded subset of the physical phase space. By Theorem~\ref{th:forward_boundedness}, the scalar potential slope functional remains bounded along every forward trajectory. Consequently, solutions remain confined to a bounded subset of the physical constraint manifold. The standard continuation theorem therefore implies that no finite escape time occurs, and the induced evolution is forward complete.
\hfill$\square$

\medskip

\begin{theorem}\textbf{(Forward Boundedness of the Scalar Sector):}
\label{th:forward_boundedness}
Let the scalar field potential $V(\phi)$ satisfy assumptions
\textnormal{(A1)}--\textnormal{(A6)}. Then every physically admissible
trajectory of the Einstein--scalar cosmological evolution satisfies
\[
\sup_{t\ge0}
\left|
\frac{V'(\phi(t))}{V(\phi(t))}
\right|
<\infty.
\]
In particular, the induced evolution admits no forward orbit along
which the scalar potential steepness becomes asymptotically unbounded.
\end{theorem}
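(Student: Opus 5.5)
The argument is a Lyapunov-type estimate on the steepness parameter $\lambda(\phi)=-V'(\phi)/V(\phi)$ along trajectories. Since $|V'/V|=|\lambda|$, it suffices to show that $|\lambda(\phi(t))|$ stays bounded for $t\ge0$. I would first dispose of the compact region. Assumption (A2) gives $V>0$ for $|\phi|\ge\Phi_0$ and $V'/V\to\lambda_\infty$ as $\phi\to\pm\infty$. Hence $|\lambda|$ is bounded on $\{|\phi|\ge\Phi_1\}$ for some $\Phi_1$. On the compact set $\{|\phi|\le\Phi_1\}$, $\lambda$ is continuous wherever $V>0$, by (A1). The only way $|\lambda|$ can become large is therefore either near zeros of $V$ or, if (A2) is read loosely, through a runaway in $\phi$. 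To cover both without case analysis I would work directly with the differential inequality for $\lambda$ in the region $|\lambda|>R$, where $R$ is the larger of the constants appearing in (A5) and (A6).

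The key computation uses (A1) for the $C^1$ regularity of $\lambda$, and (A4) to write $V''/V=\Gamma(\lambda)\lambda^2$. Then
\[
\frac{d\lambda}{dt}=-\Big(\frac{V''}{V}-\frac{V'^2}{V^2}\Big)\dot\phi=-\lambda^{2}\big(\Gamma(\lambda)-1\big)\dot\phi .
\]
I would take the Lyapunov function $W=\tfrac12(\lambda-\lambda_\infty)^2$. Its derivative is
\[
\dot W=-(\lambda-\lambda_\infty)\lambda^2(\Gamma(\lambda)-1)\,\dot\phi .
\]
Whenever $|\lambda|>R$, (A5) makes $(\lambda-\lambda_\infty)\lambda^2(\Gamma-1)\ge c|\lambda|^3>0$. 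In particular $\operatorname{sign}(\Gamma-1)=\operatorname{sign}(\lambda-\lambda_\infty)$ there. Then (A6) gives $\dot\phi(\lambda-\lambda_\infty)\ge0$, so $\dot\phi$ has the sign of $\lambda-\lambda_\infty$ or vanishes. Combining the two signs, I aim to show $\dot W\le0$ on $\{|\lambda|>R\}$, and ideally $\dot W\le -c|\lambda|^3|\dot\phi|/|\lambda-\lambda_\infty|$. The consequence is that $W$ cannot increase while $|\lambda|>R$.

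Given $\dot W\le0$ in that region, a standard trapping argument finishes the proof. Set $M=\max\{W(0),\tfrac12(R+|\lambda_\infty|)^2\}$. Suppose $W(t_1)>M$ for some $t_1$, and let $t_0$ be the last time before $t_1$ with $W(t_0)=M$. On $(t_0,t_1]$ we have $W>M$, hence $|\lambda|>R$, hence $\dot W\le0$, which contradicts $W(t_1)>W(t_0)$. Therefore $|\lambda(t)|\le|\lambda_\infty|+\sqrt{2M}$ for all $t\ge0$ in the existence interval. This gives the supremum bound and rules out unbounded steepness. The argument uses only local existence of solutions, not the Forward Completeness proposition, so there is no circularity.

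The main obstacle is the sign bookkeeping in $\dot W$. The pairing of (A5) and (A6) must be checked to yield a nonpositive product, not a nonnegative one. Under the literal conventions, (A5) and (A6) appear to give $\dot W=-(\text{positive})\cdot\dot\phi$ where $\dot\phi$ has the sign of $\lambda-\lambda_\infty$. I would therefore carefully recompute $d\lambda/dt$, including the sign convention $\lambda=-V'/V$ versus $\mathcal S=|V'|/V$. If needed, I would reinterpret (A6) as the statement that the motion decreases $|\lambda-\lambda_\infty|$, which is exactly what the paper's physical discussion asserts. A secondary issue is that $\lambda$ is defined only where $V>0$ and $V'\neq0$, the latter being needed for $\Gamma$. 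I would use (A4) to extend $\Gamma(\lambda)$ continuously, so the identity $\lambda^2(\Gamma-1)=V''/V-\lambda^2$ holds on all of $\{V>0\}$. A trajectory reaching $V=0$ with $|\lambda|\to\infty$ would require $W\to\infty$, which the trapping argument excludes.
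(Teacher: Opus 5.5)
The gap is exactly the sign step you flag, and recomputing will not fix it. Put $P:=(\lambda-\lambda_\infty)\lambda^2(\Gamma-1)$. Your (correct) formula then reads $\dot W=-P\,\dot\phi$, and (A5) gives $P\ge c|\lambda|^3>0$ on $\{|\lambda|>R\}$. So $\dot W\le 0$ there if and only if $\dot\phi\ge 0$ there. Enlarge $R$ so that $R\ge|\lambda_\infty|$. Then (A6) gives $\dot\phi\ge 0$ on $\{\lambda>R\}$ but $\dot\phi\le 0$ on $\{\lambda<-R\}$, hence $\dot W\ge 0$ on the lower branch.

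Equivalently, (A5) and (A6) together say $\dot\lambda=-\lambda^2(\Gamma-1)\dot\phi\le 0$ on all of $\{|\lambda|>R\}$. Your trapping argument therefore yields only the one-sided bound $\lambda(t)\le\max\{\lambda(0),R\}$. On the lower branch the hypotheses move $\lambda$ away from $\lambda_\infty$, so no lower bound can come out of them. This can actually happen. If $V$ has a zero $\phi_0$ (allowed by (A1)), a field creeping monotonically into $\phi_0$ from the side where $V'>0$ has $\dot\phi\le 0$ and $\lambda\to-\infty$, so it satisfies (A6). Meanwhile (A5) constrains only the auxiliary function $\Gamma(\lambda)$ of (A4), which is tied to $V$ only at large $|\phi|$, where (A2) keeps $|\lambda|$ near $|\lambda_\infty|$. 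Your fallback of reading (A6) as ``the motion decreases $|\lambda-\lambda_\infty|$'' is literally the inequality $\dot W\le 0$ on $\{|\lambda|>R\}$. That assumes the very step the proof has to supply.

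In the other direction, suppose (A5) is read as a condition on the actual ratio $VV''/V'^2$ wherever $|\lambda|>R$, which your identity $V''/V=\Gamma(\lambda)\lambda^2$ in that region requires. Then the upper half of (A5) alone excludes zeros of $V$. Let $\phi_1$ be the leftmost zero. On $(\phi_1-\delta,\phi_1)$, $\lambda=-(\log V)'$ must be unbounded above. Yet $d\lambda/d\phi=-\lambda^2(\Gamma-1)<0$ wherever $\lambda>R$, which caps $\lambda$ by $\max\{\lambda(\phi_1-\delta),R\}$ there. Hence $V>0$ on $\mathbb{R}$, and (A1)--(A2) alone make $V'/V$ bounded, with no dynamics involved.

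The paper's proof does not supply the missing sign either. It uses essentially the same functional, $\mathcal E=\frac12(V'/V-\lambda_\infty)^2$, but asserts $\frac{d}{dt}\mathcal E=-3H(\lambda-\lambda_\infty)\lambda^2(\Gamma(\lambda)-1)$. In effect the factor $\dot\phi$ in your formula is replaced by $3H$. With that identity, (A5) and $H>0$ give $\dot{\mathcal E}\le-3cH|\lambda|^3\le -CH\mathcal E^{3/2}$ for large $\mathcal E$, and the conclusion follows from $\int_0^\infty H\,dt=\infty$; (A6) plays no actual role. But $\mathcal E$ depends on $t$ only through $\phi$, so the chain rule forces a factor $\dot\phi$, exactly as you computed. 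The paper's route therefore hides the sign problem rather than solving it.

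Your identity for $\dot\lambda$ and your continuity/trapping argument are sound. What is missing is a hypothesis that actually controls the lower branch, for example $(\lambda-\lambda_\infty)\dot\lambda\le 0$ on $\{|\lambda|>R\}$, or $V>0$ on $\mathbb{R}$. Under either, the statement follows as described.
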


\medskip

\noindent\textit{Proof.}
Define on the physical phase space $\mathcal P$ the scalar potential
steepness functional
\[
\mathcal{E}(X)
:=
\frac12\left(
\frac{V'(\phi)}{V(\phi)}
-\lambda_\infty
\right)^2.
\]
This defines a smooth nonnegative functional on $\mathcal P$.
Let $X(t)$ be a forward trajectory corresponding to an expanding
cosmological solution. Using the Klein--Gordon equation together with
the Einstein equations, a direct computation yields
\[
\frac{d}{dt}\mathcal{E}
=
-3H\,(\lambda-\lambda_\infty)\,
\lambda^2\big(\Gamma(\lambda)-1\big),
\]
where $\lambda=-V'(\phi)/V(\phi)$. By assumptions \textnormal{(A5)} and \textnormal{(A6)}, there exist
constants $R>0$ and $c>0$ such that whenever $|\lambda|>R$,
\[
(\lambda-\lambda_\infty)\lambda^2
\big(\Gamma(\lambda)-1\big)
\ge c|\lambda|^3.
\]
Since $H(t)>0$ for expanding FLRW solutions, it follows that for
sufficiently large $|\lambda|$,
\[
\frac{d}{dt}\mathcal{E}
\le
-3cH\,|\lambda|^3.
\]
Moreover, for sufficiently large $|\lambda|$ there exists $C_1>0$
such that $|\lambda|^3 \ge C_1 \mathcal{E}^{3/2}$. Consequently, $\frac{d}{dt}\mathcal{E}
\le
- C\,H\,\mathcal{E}^{3/2}$ for some constant $C>0$ whenever $\mathcal{E}$ is sufficiently large. Integrating along forward trajectories and using that
\[
\int_0^t H(s)\,ds \to \infty
\qquad \text{as } t\to\infty
\]
for expanding cosmological solutions, it follows that
$\mathcal{E}(t)$ cannot diverge in forward time.
Hence there exists $M>0$ such that
\[
\left|
\frac{V'(\phi(t))}{V(\phi(t))}
\right|
\le M
\quad \text{for all } t\ge0.
\]
\hfill$\square$

\medskip

The above result establishes that cosmological trajectories are dynamically confined to a forward bounded region of the scalar sector, even in the presence of potentials admitting arbitrarily steep asymptotic regimes. In particular, steep potential regions are not dynamically accessible along forward orbits of the Einstein--scalar flow.

\begin{proposition}\textbf{(Existence of a Compact Absorbing Set):}
Under assumptions {\rm (A1)--(A6)}, there exists a compact set $B\subset\mathcal{P}$ such that for every initial condition $X_{0}\in\mathcal{P}$ there exists $T=T(X_{0})>0$ satisfying $\Phi_{t}(X_{0})\in B \quad \text{for all } t\ge T$.
\end{proposition}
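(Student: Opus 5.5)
The plan is to build $B$ as a product of a bounded density sector and a bounded steepness sector. The constraint already controls the first factor; Theorem~\ref{th:forward_boundedness} is then upgraded to a bound on the second factor that is uniform in the initial data, not merely finite for each trajectory.

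\emph{Step 1: the density sector.} By the Friedmann constraint, the normalized variables $(x,y,\Omega_m,\Omega_r)$ lie in a closed bounded subset $K_0$ of a Euclidean space for every point of $\mathcal P$. Here $x\propto\dot\phi/H$ and $y\propto\sqrt V/H$ satisfy $x^2+y^2+\Omega_m+\Omega_r=1$, with all densities nonnegative. Hence $K_0$ is compact and forward invariant, and no absorption argument is needed in these directions.

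\emph{Step 2: a uniform absorbing interval in $\lambda$.} I would reuse the functional $\mathcal E=\tfrac12(\lambda-\lambda_\infty)^2$ from the proof of Theorem~\ref{th:forward_boundedness}. Wherever $|\lambda|>R$ it satisfies $\dot{\mathcal E}\le -CH\,\mathcal E^{3/2}$. Passing to e-fold time $N=\ln a$, so that $dN=H\,dt$, gives
\[
\frac{d\mathcal E}{dN}\le -C\,\mathcal E^{3/2},
\]
which is a Riccati-type inequality. Comparison with $\dot u=-Cu^{3/2}$ yields
\[
\mathcal E(N)^{-1/2}\ge \mathcal E(0)^{-1/2}+\tfrac{C}{2}N.
\]
This bound is independent of the initial value. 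Hence for $N\ge 1$, either $|\lambda|\le R$, or $\mathcal E\le 4/C^2$. Choosing $M:=\max\{R,\;|\lambda_\infty|+2\sqrt2/C\}+1$, the interval $\{|\lambda|\le M\}$ is entered after at most one e-fold. It is forward invariant because the inequality $\dot{\mathcal E}<0$ holds on its boundary. Since $\int_0^t H\,ds\to\infty$ (the fact used in Theorem~\ref{th:forward_boundedness}), for each $X_0$ there is a finite $T(X_0)$ at which $N=1$.

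\emph{Step 3: compactness.} Set $B:=K_0\times\{|\lambda|\le M\}$, or equivalently, in the compactified variable of the Steepness Compactification Lemma, $\widetilde{\mathcal S}\le M/\sqrt{1+M^2}<1$. This is a closed and bounded subset of $\mathcal P$, hence compact, and by construction $\Phi_t(X_0)\in B$ for all $t\ge T(X_0)$.

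The main obstacle is whether $\lambda$ can be treated as a phase-space coordinate at all. The steepness is a function of $\phi$, and $\phi$ itself is unbounded, so $\lambda$ alone does not fix a point of $\mathcal P$. I would resolve this by working in the reduced coordinates $(x,y,\Omega_m,\Omega_r,\lambda)$. Assumption (A4) closes the system there, since $\Gamma=\Gamma(\lambda)$, at least for large $|\phi|$; on the bounded $\phi$-region, $\lambda$ is continuous and hence bounded. A second delicate point is the uniformity in $X_0$. The explicit solution of the Riccati comparison is what provides it, and it relies on $H>0$ along admissible trajectories, which I take as part of physical admissibility as in Theorem~\ref{th:forward_boundedness}.
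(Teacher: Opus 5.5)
Your argument keeps the paper's skeleton: the Friedmann constraint bounds the density variables, and a sublevel set of the steepness handles the scalar sector. The decisive step, however, is genuinely different.

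\textbf{What the paper does.} It takes the constant $M$ from Theorem~\ref{th:forward_boundedness} and sets $B=\{X\in\mathcal P:|V'/V|\le M\}$. That theorem only gives $\sup_{t\ge0}|\lambda(t)|<\infty$ orbit by orbit, so $M$ depends on $X_0$. The paper therefore never exhibits one set $B$ serving all initial data, which is what the quantifier order in the statement demands.

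\textbf{What you do instead.} You use the superlinear rate quantitatively. In e-fold time, the comparison $\mathcal E(N)^{-1/2}\ge\mathcal E(0)^{-1/2}+\tfrac{C}{2}N$ gives $\mathcal E(N)\le 4/(C^2N^2)$ independently of $\mathcal E(0)$. Combined with forward invariance of $\{|\lambda|\le M\}$ (valid because $M>R\ge|\lambda_\infty|$), this yields a uniform absorbing interval entered within one e-fold. Only the time $T(X_0)$ at which $N=1$ depends on the orbit.

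So, granting the paper's inequalities, your route proves the statement as written, whereas the paper's shows only that each orbit is bounded. You also rightly flag that $B$ is compact only after $\phi$ is replaced by $\lambda$ (or $\widetilde{\mathcal S}$). The paper asserts compactness without comment, even though for an exponential potential $\{|V'/V|\le M\}$ contains every $\phi$.

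You should know, though, that the inequality $\dot{\mathcal E}\le-CH\mathcal E^{3/2}$ you import from the proof of Theorem~\ref{th:forward_boundedness} does not survive a direct computation. Step~2 is therefore not actually justified by (A1)--(A6).
\begin{itemize}
\item Since $d\lambda/d\phi=\lambda^2(1-\Gamma)$, one has $\dot{\mathcal E}=-\dot\phi\,(\lambda-\lambda_\infty)\lambda^2(\Gamma-1)$. The paper's formula has $3H$ in place of $\dot\phi$.
\item With (A5), the product $(\lambda-\lambda_\infty)\lambda^2(\Gamma-1)$ is positive. For $\lambda>R$ the decay rate is therefore proportional to $\dot\phi=\sqrt6Hx$, which has no positive lower bound, so the $X_0$-independent Riccati bound is lost.
\item For $\lambda<-R$, assumption (A6) forces $\dot\phi\le0$, and $\mathcal E$ is then non-decreasing rather than decaying.
\item By (A2), for $R>|\lambda_\infty|$ the set $\{|\lambda|>R\}$ lies in a bounded $\phi$-interval, and $|\lambda|$ can blow up there only near zeros of $V$. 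In that region (A4) does not identify $\Gamma(\phi)$ with the function $\Gamma(\lambda)$ appearing in (A5).
\item Your remark that $\lambda$ is bounded on bounded $\phi$-sets holds only if $V$ has no zeros. In that case, combined with (A2), it bounds $\lambda$ globally and makes Step~2 unnecessary.
\end{itemize}
Your uniform comparison argument is the right mechanism. A complete proof, however, needs a genuine dissipation estimate in the steep region, for instance a sign condition with a lower bound on $x$ there. This defect is inherited from the paper, not introduced by you.
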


\medskip

\noindent\textit{Proof.}
By Theorem~\ref{th:forward_boundedness}, there exists $M>0$ such that
\[
\frac{|V'(\phi(t))|}{V(\phi(t))}\le M
\]
for all sufficiently large $t$. Together with the Friedmann constraint, this implies that all density variables and the scalar potential steepness remain eventually bounded in forward time.

Define
\[
B=\left\{X\in\mathcal{P}:
\frac{|V'(\phi)|}{V(\phi)}\le M
\right\}.
\]
Then $B$ is compact. By forward boundedness of the slope functional, every trajectory eventually enters $B$ and remains there for all subsequent time.
\hfill$\square$

This asymptotic trapping mechanism implies that the late--time cosmological dynamics is governed by invariant subsets of a compact region of the physical phase space, independently of the global steepness properties of the scalar field potential.

\begin{proposition}[Asymptotic Compactness]
The evolution generated by the Einstein--scalar system is asymptotically compact on the physical phase space.
\end{proposition}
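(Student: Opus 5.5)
The plan is to derive asymptotic compactness directly from the Proposition on the Existence of a Compact Absorbing Set, using the standard definition from dissipative dynamical systems \cite{Temam1997,SellYou2002}. The semiflow $\Phi_t$ on $\mathcal P$ is called asymptotically compact if, for every bounded sequence of initial data $X_n\in\mathcal P$ and every sequence of times $t_n\to\infty$, the sequence $\Phi_{t_n}(X_n)$ has a convergent subsequence in $\mathcal P$. Since $\mathcal P$ is a subset of a finite-dimensional coordinate domain, I would work in the compactified coordinates of the Steepness Compactification Lemma. There the density variables are bounded by the Friedmann constraint and $\widetilde{\mathcal S}\in[0,1)$. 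By that Lemma, $\Phi_t$ is a continuous semiflow, and by the Forward Completeness Proposition it is defined for all $t\ge0$.

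The first step is to upgrade the pointwise absorbing property to a uniform one on bounded sets. The Proposition only gives an entry time $T(X_0)$ for each initial condition, but asymptotic compactness needs a single $T$ that works for a whole bounded family. I would argue as follows. The open neighbourhood $B_\varepsilon=\{\mathcal S< M+\varepsilon\}$ of $B$ is entered by every orbit, and by continuity of $\Phi_t$ in the initial data each $X_0$ has a neighbourhood whose points all lie in $B_\varepsilon$ at time $T(X_0)$. To conclude that they then stay in a compact set afterwards, I would invoke the differential inequality $\dot{\mathcal E}\le -CH\mathcal E^{3/2}$ from the proof of Theorem~\ref{th:forward_boundedness}. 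This inequality shows that the sublevel sets $\{\mathcal E\le E_0\}$ with $E_0$ large are forward invariant for expanding solutions, so $\{\mathcal S\le M'\}$ is forward invariant for a suitable $M'\ge M$.

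Next I would cover a bounded set $K\subset\mathcal P$ with finitely many such neighbourhoods. This works provided $K$ has compact closure in $\mathcal P$, which holds whenever $\mathcal S$ is bounded on $K$; that is the natural notion of a bounded set here, since otherwise $K$ reaches the boundary $\widetilde{\mathcal S}=1$. The covering gives a uniform time $T_K$ with $\Phi_t(K)\subset B'=\{\mathcal S\le M'\}$ for all $t\ge T_K$. The set $B'$ is closed and bounded in finite dimensions, hence compact, by the same reasoning used to show $B$ is compact.

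With this in hand the conclusion is immediate. For $t_n\to\infty$ we eventually have $t_n\ge T_K$, so $\Phi_{t_n}(X_n)\in B'$. Compactness of $B'$ then gives a convergent subsequence whose limit lies in $B'\subset\mathcal P$. The same argument shows the forward orbit of each point is precompact, which is exactly the input needed for the LaSalle argument of Section~\ref{sec:invariant} and for the $\omega$-limit sets in Section~\ref{sec:global}.

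The main obstacle is the uniformity step. Theorem~\ref{th:forward_boundedness} controls each trajectory separately, and the bound $M$ may depend on the trajectory, so forward invariance of a fixed large sublevel set of $\mathcal E$ has to be extracted carefully from the sign of $\dot{\mathcal E}$ in the region $|\lambda|>R$. I would try to do this first by choosing $E_0>\tfrac12(R+|\lambda_\infty|)^2$, so that $\dot{\mathcal E}<0$ on the level set $\{\mathcal E=E_0\}$ whenever $H>0$. If that fails, for example because of non-expanding data, I would restrict the definition of $\mathcal P$ to expanding solutions, as is implicitly done in the proof of Theorem~\ref{th:forward_boundedness}.
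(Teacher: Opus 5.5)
Your argument is correct relative to what the paper has already established, and it is more careful than the paper's own proof. The paper observes only that each forward orbit eventually lies in the compact set $B$ and is therefore precompact, and it identifies this orbitwise statement with asymptotic compactness. That is exactly what the LaSalle step in Theorem~\ref{th:global_classification} uses. It is weaker, however, than the sequential property for bounded families of initial data that you prove, and the global-attractor theorem (attraction of every bounded set) actually needs the stronger property. Your extra ingredients are continuity of $\Phi_{T(X_0)}$ to push a neighbourhood of $X_0$ into an open set around $B$, forward invariance of a fixed sublevel set of $\mathcal E$, and a finite subcover of $\overline K$. Together they are the standard passage from point dissipativity to compact dissipativity. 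Your remark that the bound $M$ of Theorem~\ref{th:forward_boundedness} may depend on the trajectory (so the paper's $B$ is not obviously a single set) is well taken. Once $\sqrt{2E_0}>R+|\lambda_\infty|$ is fixed, integrating the same inequality against $\int_0^t H\,ds\to\infty$ shows that every expanding orbit enters $\{\mathcal E<E_0\}$, which is all your covering needs.

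Your route has two costs. First, you use the pointwise inequality $\dot{\mathcal E}\le -3cH|\lambda|^3$ from inside the proof of Theorem~\ref{th:forward_boundedness}, not just the theorem's statement. Second, you read ``bounded'' as ``$\mathcal S$ bounded''. In the compactified $\widetilde{\mathcal S}$ coordinates, where the paper calls $\mathcal P$ bounded, no covering argument can reach sets accumulating at $\widetilde{\mathcal S}=1$. That second point is easily removed. In e-fold time $N$ the inequality reads $d\mathcal E/dN\le -C\mathcal E^{3/2}$ above the threshold, so if $\mathcal E\ge E_0$ on $[0,N]$ then $\mathcal E(N)\le 4/(C^2N^2)$. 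Hence every orbit enters $\{\mathcal E\le E_0\}$ within $2/(C\sqrt{E_0})$ e-folds, independently of the initial data, and the covering becomes unnecessary.

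Finally, your forward-invariance step rests on the sign of $\dot{\mathcal E}$ on a level set. Since $d\lambda/d\phi=\lambda^2(1-\Gamma)$, a direct computation with $\mathcal E=\tfrac12(\lambda-\lambda_\infty)^2$ gives $\dot{\mathcal E}=-\dot\phi\,(\lambda-\lambda_\infty)\lambda^2(\Gamma-1)$, with $\dot\phi$ where the paper writes $3H$. Your argument is therefore only as sound as that inequality. The paper's proof depends on it just as much, through the absorbing-set proposition.
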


\medskip

\noindent\textit{Proof.}
By the previous proposition, every forward trajectory is eventually contained in the compact absorbing set $B$. Since the physical phase space is finite dimensional, bounded subsets are precompact. Hence every forward orbit is precompact, and the induced evolution is asymptotically compact.
\hfill$\square$

\begin{theorem}[Existence of a Global Attractor]\label{th:existance of a Global Attractor}
Let $\mathcal{P}$ denote the physical phase space endowed with the metric induced by the Euclidean norm. Under assumptions {\rm (A1)--(A6)}, the Einstein--scalar cosmological evolution generates a forward complete continuous flow $\Phi_{t}:\mathcal{P}\to\mathcal{P},\quad t\ge0$ which is point dissipative and asymptotically compact. Consequently, there exists a unique compact global attractor $\mathcal{A}\subset\mathcal{P}$ that attracts every bounded subset of $\mathcal{P}$.
\end{theorem}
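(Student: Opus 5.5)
The plan is to apply the standard abstract existence theorem for dissipative semiflows (Hale; Temam; Sell--You): a continuous semiflow on a complete metric space that is point dissipative and asymptotically compact has a unique compact global attractor, namely
\[
\mathcal A=\omega(B)=\bigcap_{s\ge0}\overline{\bigcup_{t\ge s}\Phi_t(B)},
\]
where $B$ is a compact absorbing set. The steps are: first, identify the hypotheses with results already proved; second, upgrade pointwise absorption to absorption of bounded sets; third, verify the attractor properties of $\omega(B)$ and its uniqueness.

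\textbf{Hypotheses.} The Proposition on Forward Completeness gives that $\Phi_t$ is a well-defined continuous semiflow on $\mathcal P$ for all $t\ge0$. Continuity in $(t,X_0)$ comes from the locally Lipschitz vector field, with the Steepness Compactification Lemma ensuring continuity up to the boundary $\widetilde{\mathcal S}=1$. I will take $\mathcal P$ in the compactified coordinates $(\Omega$-variables$,\widetilde{\mathcal S})$ and its closure, so that it is a closed subset of Euclidean space and therefore complete. The Proposition on the compact absorbing set gives point dissipativity directly. The Asymptotic Compactness proposition gives precompactness of forward orbits.

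\textbf{Uniform absorption (main obstacle).} The abstract theorem needs bounded sets, not just single points, to be attracted, and $T(X_0)$ as given depends on the initial point. Since $\mathcal P$ in compactified variables is bounded, the claim really requires attraction of all of $\mathcal P$. My approach is to use compactness of $\overline{\mathcal P}$ together with continuity of $\Phi_t$. For each $X_0$, the point $\Phi_{T(X_0)}(X_0)$ lies in $B$, so by continuity a neighbourhood $U_{X_0}$ is mapped at time $T(X_0)$ into a fixed neighbourhood $B_\varepsilon$ of $B$. A finite subcover then yields a uniform entry time into $B_\varepsilon$. To keep orbits inside $B_\varepsilon$ afterwards, I will use the Lyapunov functional $\mathcal E$ from the proof of Theorem~\ref{th:forward_boundedness}. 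Its strict decrease where $|\lambda|>R$ makes $B'=\{\mathcal E\le M'\}$, for suitable $M'$, forward invariant. I will therefore replace $B$ by this compact forward-invariant absorbing set $B'$. This is the step most likely to need care, especially uniformity near the boundary $\widetilde{\mathcal S}=1$ and the sign condition $H>0$.

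\textbf{Attractor properties.} With a compact, forward-invariant set $B'$ that absorbs bounded sets, define $\mathcal A=\omega(B')$. Because it is a decreasing intersection of nonempty compact sets, $\mathcal A$ is nonempty and compact. Invariance $\Phi_t(\mathcal A)=\mathcal A$ follows from continuity of $\Phi_t$ and the sequential characterization of $\omega$-limits, with asymptotic compactness supplying convergent subsequences. Attraction, $\operatorname{dist}(\Phi_t(D),\mathcal A)\to0$ for bounded $D$, is proved by contradiction: take a sequence violating the bound, use absorption to place it in $B'$, extract a limit, and observe that the limit lies in $\mathcal A$. Uniqueness follows because two compact invariant sets that each attract the other satisfy $\mathcal A_1=\Phi_t(\mathcal A_1)\to\mathcal A_2$, so $\mathcal A_1\subset\mathcal A_2$, and symmetrically $\mathcal A_2\subset\mathcal A_1$.
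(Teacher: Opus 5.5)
\paragraph{Overall route.} Your strategy is the paper's. Its proof assembles the forward-completeness, absorbing-set and asymptotic-compactness propositions and then cites the abstract theorem for dissipative semiflows in Sell--You. Your uniform-absorption and attractor-property steps therefore just unpack that citation.

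\paragraph{The gap: the choice of space.} The unpacking fails at the one point where you leave the setting the abstract theorem needs. Passing to the closure $\overline{\mathcal P}$ in the variables $(\Omega,\widetilde{\mathcal S})$ adds points with $\widetilde{\mathcal S}=1$, i.e.\ $\mathcal S=\infty$, and none of the results you invoke cover them.
\begin{itemize}
\item Theorem~\ref{th:forward_boundedness} and the absorbing-set proposition concern physically admissible data of finite steepness, so a boundary point has no entry time $T(X_0)$.
\item The compactification lemma only extends the vector field \emph{continuously} to $\widetilde{\mathcal S}=1$. That gives neither uniqueness nor continuous dependence, so $\Phi_t$ need not be a continuous semiflow on $\overline{\mathcal P}$ at all.
\end{itemize}
Your finite subcover therefore has no neighbourhoods $U_{X_0}$ at boundary points. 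Dropping the boundary does not help: $\mathcal P$ is then neither compact nor complete in the compactified metric, and nothing stops entry times from blowing up as $\widetilde{\mathcal S}\to1$.

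The goal you set yourself, uniform attraction of all of $\mathcal P$ because it is bounded in compactified coordinates, is also stronger than the theorem needs and can fail outright. Suppose $\{\widetilde{\mathcal S}=1\}$ is invariant under a continuous extended flow. Uniform continuity on $[0,T]\times\overline{\mathcal P}$ then forces orbits starting near it to stay near it up to time $T$, for every $T$. Meanwhile $\mathcal A$ lies at positive distance from that set, so no uniform entry time exists.

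\paragraph{How to repair it.} Keep the Euclidean metric of the original variables (density parameters and $\mathcal S$). The paper implicitly does this when it asserts that $\mathcal P$ is complete and that bounded subsets are precompact.
\begin{itemize}
\item There $\mathcal P$ is closed in $\mathbb R^n$, and each bounded $D\subset\mathcal P$ has compact closure inside $\mathcal P$. Every point of $\overline D$ has finite steepness and a genuine entry time.
\item Run your subcover argument on $\overline D$. Combine it with your forward-invariant sublevel set $B'=\{\mathcal E\le M'\}$, or with the standard return-time argument for locally compact spaces, which needs no Lyapunov function. This gives a compact forward-invariant set absorbing $D$.
\item Your construction $\mathcal A=\omega(B')$ and your invariance, attraction and uniqueness arguments then go through unchanged. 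This is exactly what the cited abstract theorem packages.
\end{itemize}
If you want to stay in compactified coordinates, you need a quantitative ``coming down from infinity'' bound instead. Integrating $\frac{d}{dt}\mathcal E\le -CH\mathcal E^{3/2}$ gives $\mathcal E(t)\le 4\bigl(C\int_0^tH\,ds\bigr)^{-2}$ independently of $\mathcal E(0)$, for as long as $\mathcal E$ stays large. This yields a uniform entry time only if $\int_0^tH\,ds$ is bounded below uniformly in the initial data, and the paper provides no such bound.
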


\medskip

\noindent\textit{Proof.}
Forward completeness follows from the preceding proposition. The existence of a compact absorbing set implies point dissipativity, while asymptotic compactness has been established above. Since $\mathcal{P}$ is a complete metric space, the existence of a unique global attractor follows from the standard theory of dissipative semiflows on finite dimensional metric spaces (see \cite{SellYou2002}).
\hfill$\square$

\section{Global Classification Theorem}
\label{sec:global}

The forward boundedness of the scalar sector established in Section~\ref{sec:Asymptotic_Compactness} implies that cosmological trajectories are dynamically confined to a bounded region of the physical phase space. In particular, forward evolution does not permit asymptotic exploration of regimes in which the scalar potential steepness observable becomes unbounded. This restriction induces a global trapping mechanism for the Einstein--scalar flow and yields the following classification result for the late--time dynamics.

\begin{theorem}[Global Asymptotic Trapping]\label{th:global_classification}
Let $V(\phi)$ be a scalar field potential satisfying assumptions \emph{(A1)}--\emph{(A6)}. Then the global attractor $\mathcal A$ obtained in Theorem~\ref{th:existance of a Global Attractor} satisfies $\omega(X_0)\subset \mathcal M_\infty
\quad \text{for all } X_0\in \mathcal P$. In particular, the asymptotic behavior of all cosmological solutions is governed by invariant subsets of $\mathcal A$ contained in $\mathcal M_\infty$.
\end{theorem}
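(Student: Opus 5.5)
The approach is a LaSalle-type argument built on the Lyapunov functional $\mathcal L=\Omega_m+\tfrac43\Omega_r$ of Section~\ref{sec:invariant}, with the compactness established in Section~\ref{sec:Asymptotic_Compactness}.

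\emph{Step 1: nonempty, compact, invariant $\omega$-limit sets.} Fix $X_0\in\mathcal P$. By the compact absorbing set proposition there is $T$ with $\Phi_t(X_0)\in B$ for all $t\ge T$. The forward orbit is therefore precompact. The standard properties of $\omega$-limit sets for continuous semiflows on a metric space then give that $\omega(X_0)$ is nonempty, compact, connected and invariant, and that $\operatorname{dist}(\Phi_t(X_0),\omega(X_0))\to0$. Since $\mathcal A$ attracts bounded sets, in particular $\{X_0\}$, we also get $\omega(X_0)\subset\mathcal A$.

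\emph{Step 2: $\mathcal L$ is constant on $\omega(X_0)$.} By \eqref{eq:Lprime}, $t\mapsto\mathcal L(\Phi_t(X_0))$ is nonincreasing, using $H>0$ on expanding solutions. It is bounded below by $0$ because $\Omega_m,\Omega_r\ge0$, so it converges to some $\ell\ge0$. By continuity of $\mathcal L$ on $B$, $\mathcal L\equiv\ell$ on $\omega(X_0)$. Invariance of $\omega(X_0)$ then shows that every orbit inside it has $\tfrac{d}{dt}\mathcal L=0$. By the equality case of \eqref{eq:Lprime}, such an orbit must satisfy $\Omega_m=\Omega_r=0$, hence lies in $\mathcal M_\infty$. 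We conclude that $\omega(X_0)$ is contained in the largest invariant subset of $\mathcal M_\infty$, which gives the claim. The last sentence of the theorem follows by combining this with $\omega(X_0)\subset\mathcal A$.

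\emph{Main obstacle.} The hard step is the strict dissipation used in Step 2. The derivative $-3H\Omega_m-4H\Omega_r$ vanishes only on $\mathcal M_\infty$ if $H$ stays bounded away from zero on the limit set. If $H\to0$ is possible, for instance when $V$ has a zero minimum, then points with $H=0$ and $\Omega_m>0$ would also be stationary for $\mathcal L$.

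\emph{Handling the obstacle.} I would first rewrite the dynamics in $e$-fold time $N=\ln a$. In that time $d\mathcal L/dN=-3\Omega_m-4\Omega_r$, which removes the factor $H$. It is also consistent with the normalized variables $\Omega_i$ living on the compactified phase space. Reparametrizing time by the positive factor $H$ preserves orbits and $\omega$-limit sets, provided $\int H\,dt=\infty$; this divergence was already used in the proof of Theorem~\ref{th:forward_boundedness}. With this reparametrization, the LaSalle argument goes through with the zero set of $d\mathcal L/dN$ equal to exactly $\mathcal M_\infty$.

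A secondary point is that \eqref{eq:Lprime} as written omits the $\Omega$-normalization terms proportional to $(1+q)$. If needed, I would recompute $d\mathcal L/dN$ directly and check that it is nonpositive, with equality only on $\mathcal M_\infty$, before applying the argument above.
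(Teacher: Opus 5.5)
Your argument is the same as the paper's: precompactness from the absorbing set $B$, then LaSalle with $\mathcal L=\Omega_m+\tfrac43\Omega_r$ and the identity \eqref{eq:Lprime}. The genuine gap is in Step 2, at exactly the point you deferred as ``secondary'': the check you propose to carry out fails. For the normalized densities $\Omega_i=\rho_i/(3H^2)$ (units $8\pi G=1$), the continuity and Raychaudhuri equations give
\[
\frac{d\Omega_m}{dN}=(2q-1)\,\Omega_m,\qquad
\frac{d\Omega_r}{dN}=(2q-2)\,\Omega_r,\qquad
1+q=-\frac{\dot H}{H^2}=\frac{\dot\phi^2}{2H^2}+\frac32\,\Omega_m+2\,\Omega_r,
\]
so $d\mathcal L/dN=(2q-1)\Omega_m+\tfrac43(2q-2)\Omega_r$. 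Whenever $q>1$ this is strictly positive unless $\Omega_m=\Omega_r=0$. This happens, for instance, near kinetic domination ($q\to2$), where $\rho_\phi\propto a^{-6}$ decays faster than $\rho_m$. It also vanishes with $\Omega_m>0$ whenever $q=\tfrac12$ and $\Omega_r=0$. So $\mathcal L$ is not monotone, and the zero set of its derivative is not $\mathcal M_\infty$. In effect, \eqref{eq:Lprime} applies the unnormalized laws $\dot\rho_m=-3H\rho_m$, $\dot\rho_r=-4H\rho_r$ to the $\Omega_i$ and drops precisely the $(1+q)$ terms. Passing to $e$-fold time removes the factor $H$ but not this sign problem, so the obstacle you singled out ($H\to0$) is not the operative one.

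No other Lyapunov functional can close the gap, because the conclusion itself fails inside the class the paper declares admissible. Take $V=V_0e^{-\lambda\phi}$ with $\lambda^2>3$, together with dust; Section~\ref{sec:example_of_SF_Potential} lists this potential as satisfying (A1)--(A6). Consider the Copeland--Liddle--Wands scaling point $\dot\phi/(\sqrt6\,H)=\sqrt{V}/(\sqrt3\,H)=\sqrt{3/2}/\lambda$, with $\Omega_m=1-3/\lambda^2>0$ and $q=\tfrac12$. It is an asymptotically stable equilibrium of the normalized system with $H>0$. Hence every $X_0$ in its basin has $\omega(X_0)$ equal to that point, which lies outside $\mathcal M_\infty$. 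The paper's own later remark that scaling and tracking sets sit inside $\mathcal A$ conflicts with the theorem for the same reason. Since the paper's proof rests on the same identity \eqref{eq:Lprime} and the same LaSalle step, your proposal reproduces it faithfully and inherits the defect.

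A lesser issue is in your Step 1. $B=\{|V'/V|\le M\}$ is not compact in the original variables; for exponential $V$ it is all of $\mathcal P$. Precompactness only makes sense in normalized variables with $\lambda$ replacing $\phi$ via (A4), which you would have to set up explicitly.
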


\medskip

\noindent\textit{Proof.}
By Theorem~\ref{th:existance of a Global Attractor}, the Einstein--scalar
cosmological evolution generates a forward complete continuous flow admitting a unique compact global attractor $\mathcal A\subset \mathcal P$. Again by Theorem~\ref{th:forward_boundedness}, every forward trajectory remains bounded in the physical phase space. Since the induced evolution admits a compact absorbing set, forward orbits are precompact. By the monotonic decay of the Lyapunov functional introduced in Section~\ref{sec:invariant}, LaSalle's invariance principle applies
\cite{LaSalle1960}. Consequently, $\omega(X_0)\subset \mathcal M_\infty
\quad \text{for all } X_0\in \mathcal P$. Since the global attractor $\mathcal A$ is the union of all $\omega$--limit sets of bounded trajectories, it follows that $\mathcal A\subset \mathcal M_\infty$.
\hfill$\square$

\medskip

The above result implies that late--time cosmological dynamics is globally confined to invariant subsets of a compact region of the physical phase space, independently of the detailed functional form or asymptotic steepness of the scalar field potential. Scalar field dominated and scaling regimes therefore arise as asymptotic dynamical features determined by the global structure of the Einstein--scalar flow rather than by local stability properties of equilibrium configurations. While the global classification theorem establishes convergence toward an invariant asymptotic set, it does not by itself characterize the internal geometric structure of this set. Quantitative information about its dimensionality is essential for determining the effective reduction of dynamical degrees of freedom governing late--time cosmological evolution. This motivates a more refined geometric and topological analysis of the invariant set in the subsequent section.

\section{Topological Dimension of the Global Asymptotic Set}
\label{sec:dimension}

The forward boundedness of the scalar sector implies that late--time cosmological evolution is dynamically confined to invariant subsets of the physical phase space. In particular, by LaSalle’s invariance principle together with the global trapping mechanism established in Section~\ref{sec:Asymptotic_Compactness}, the $\omega$–limit set of every physically admissible trajectory is contained in the invariant subset $\mathcal M_\infty$. Consequently, the asymptotic dynamics is confined to a lower dimensional invariant manifold corresponding to scalar–field dominated configurations.

\begin{theorem}\textbf{(Late--Time Dimensional Reduction):}\label{th:Topological_dimension}
Let $V(\phi)$ satisfy assumptions \textnormal{(A1)--(A6)} and let $\mathcal A$ denote the invariant set governing the asymptotic cosmological dynamics. Then $\mathcal A \subset \mathcal M_\infty$ and $\dim_{\mathrm{top}}(\mathcal A) \le 2$. Moreover, if the reduced dynamics on $\mathcal M_\infty$ is constrained by an additional invariant relation, then $\mathcal A$ is contained in a lower dimensional invariant subset of $\mathcal M_\infty$.
\end{theorem}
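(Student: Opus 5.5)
The plan is to first obtain the inclusion $\mathcal A\subset\mathcal M_\infty$ directly from Theorem~\ref{th:global_classification}, and then bound the topological dimension by exhibiting $\mathcal M_\infty$, intersected with the compact absorbing set $B$, as a subset of a two-dimensional manifold (with boundary). Monotonicity of topological dimension for subsets of separable metric spaces then gives $\dim_{\mathrm{top}}(\mathcal A)\le\dim_{\mathrm{top}}(\mathcal M_\infty\cap B)\le 2$.

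The inclusion step is immediate. By Theorem~\ref{th:existance of a Global Attractor}, $\mathcal A$ is compact and invariant, so every point of $\mathcal A$ lies on a complete orbit contained in $\mathcal A$. By Theorem~\ref{th:global_classification}, $\mathcal A\subset\mathcal M_\infty$, since $\mathcal A$ is the union of $\omega$-limit sets of bounded sets. I would also note that the Lyapunov functional $\mathcal L$ of Section~\ref{sec:invariant} is constant, and hence zero, on full invariant orbits in $\mathcal A$. This is a second way to see the same inclusion.

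The dimension count is the substantive step. I would choose coordinates on the physical phase space $\mathcal P$ consisting of the normalized scalar kinetic and potential variables $x=\dot\phi/(\sqrt6 H)$ and $y=\sqrt{V}/(\sqrt3 H)$, the densities $\Omega_m,\Omega_r$, and the compactified steepness $\widetilde{\mathcal S}$ from the Steepness Compactification Lemma. The Friedmann constraint $x^2+y^2+\Omega_m+\Omega_r=1$ cuts $\mathcal P$ down to a manifold of dimension four. On $\mathcal M_\infty=\{\Omega_m=\Omega_r=0\}$ the constraint becomes $x^2+y^2=1$ with $y\ge0$, which is a closed arc. The remaining free variable is $\widetilde{\mathcal S}\in[0,1]$. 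So $\mathcal M_\infty$ is homeomorphic to a subset of $[-1,1]\times[0,1]$, a two-dimensional set, and compactness of $B$ keeps everything inside this closed rectangle.

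For the ``moreover'' clause, an additional invariant relation $F(x,\widetilde{\mathcal S})=0$ with $\nabla F\neq0$ cuts out, by the implicit function theorem, a one-dimensional $C^1$ submanifold containing the invariant dynamics. The relation I have in mind is $\widetilde{\mathcal S}$ being constant when $\Gamma\equiv1$ (exponential potentials, via (A4)). Applying LaSalle within this submanifold, or simply invariance of $\mathcal A$, places $\mathcal A$ in it, which gives dimension at most one.

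The main obstacle is making the coordinate description rigorous. The variable $y$ may vanish, and $\widetilde{\mathcal S}$ is only a function of $\phi$, so it is not a faithful coordinate when $\mathcal S$ is not monotone in $\phi$. I would first try to work with $\phi$ itself, or with its compactification, in place of $\widetilde{\mathcal S}$. Forward boundedness (Theorem~\ref{th:forward_boundedness}) together with (A2) then needs to confine $\phi$, or the relevant steepness data, to a one-dimensional compact set, and the map from that set into $\mathcal M_\infty$ has to be checked to be an embedding so that the dimension bound transfers.
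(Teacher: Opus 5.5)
Your main line is the paper's own argument. The paper also gets $\mathcal A\subset\mathcal M_\infty$ from LaSalle via the global trapping theorem. It then bounds the dimension by restricting the Friedmann constraint to $\mathcal M_\infty$ (so $\Omega_\phi=1$) and adjoining the bounded steepness variable; your arc $x^2+y^2=1$ times $\widetilde{\mathcal S}\in[0,1]$ is that count made explicit. Two of your additions are genuine improvements.

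\emph{Inclusion.} Rely on your second argument, not the first. The trapping theorem only controls $\omega$-limit sets of points, and $\mathcal A=\omega(B)$ is generally strictly larger than their union. For $\dot x=x-x^3$ the attractor is $[-1,1]$, not $\{-1,0,1\}$, and this is exactly the step the paper passes over. Your complete-orbit argument closes it:
\begin{itemize}
\item along a complete orbit in compact $\mathcal A$, $\mathcal L$ is nonincreasing;
\item its $\alpha$-limit set is invariant with $\mathcal L$ constant on it, so it lies in $\{\dot{\mathcal L}=0\}=\mathcal M_\infty$, where $\mathcal L=0$;
\item since $\mathcal L\ge0$, this forces $\mathcal L\equiv0$ along the whole orbit.
\end{itemize}

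\emph{Dimension count.} Your worry that $\widetilde{\mathcal S}$ is not a faithful coordinate is correct, and the paper ignores it, but the fix is easy. On $\mathcal M_\infty$ the state is $(\phi,\dot\phi)$, with $H>0$ fixed by the Friedmann constraint, or $(x,\phi)$ in normalized variables. So $\mathcal M_\infty$ sits inside a two-manifold. Monotonicity of $\dim_{\mathrm{top}}$ for arbitrary subsets of separable metric spaces then gives $\dim_{\mathrm{top}}(\mathcal A)\le2$ with no compactness needed. Do not try to confine $\phi$ to a compact set. That is unnecessary, and it is false in general: for $V=V_0e^{-\lambda\phi}$ with $\lambda\neq0$, late-time solutions have $\phi$ diverging while $\lambda$ stays constant.

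The step that fails is in the ``moreover'' clause. Invariance of $\mathcal A$ does not place $\mathcal A$ inside another invariant set $N=\{F=0\}$, because two invariant sets need not be nested. Applying LaSalle \emph{within} $N$ says nothing about orbits of $\mathcal M_\infty$ lying outside $N$. You need one of the following:
\begin{itemize}
\item Read the hypothesis as saying the relation holds on all of $\mathcal M_\infty\cap\mathcal P$. In your exact-exponential example, $\widetilde{\mathcal S}$ is a constant rather than a coordinate, so $\mathcal M_\infty$ is already the arc and containment is immediate.
\item Produce a nonnegative function on $\mathcal M_\infty$ that vanishes exactly on $N$ and strictly decreases along orbits off $N$ (the structure $\mathcal L$ has relative to $\mathcal M_\infty$), then rerun the complete-orbit argument.
\end{itemize}
Neither is available for free for merely asymptotically exponential potentials, where $\lambda$ is still a genuine variable. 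The paper does not supply this step either; its proof only says $\mathcal A$ ``may be'' contained in a lower-dimensional invariant subset.
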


\medskip

\noindent\textit{Proof.}
By Theorem~\ref{th:existance of a Global Attractor}, the induced evolution on the physical phase space $\mathcal P$ admits a unique compact global attractor $\mathcal A$ attracting every bounded subset of $\mathcal P$. By LaSalle’s invariance principle together with the forward boundedness established in Section~\ref{sec:Asymptotic_Compactness}, the $\omega$–limit set of every trajectory is contained in $\mathcal M_\infty$. Since the invariant set $\mathcal A$ is the union of all $\omega$–limit sets of bounded trajectories, it follows that $\mathcal A\subset\mathcal M_\infty$. Restricting the Friedmann constraint to $\mathcal M_\infty$ yields the relation $\Omega_\phi=1$, which defines a compact invariant manifold for the scalar–field dominated configurations. By Theorem~\ref{th:forward_boundedness}, the scalar potential steepness observable remains bounded along forward trajectories and therefore evolves on a compact interval of the physical phase space. Consequently, the invariant manifold $\mathcal M_\infty$ may be regarded as a two–dimensional compact invariant manifold for the reduced asymptotic dynamics.

Since $\mathcal A$ is a compact invariant subset of $\mathcal M_\infty$, it follows that $\dim_{\mathrm{top}}(\mathcal A)\le2$. If the reduced dynamics on $\mathcal M_\infty$ is constrained by an additional invariant relation, then $\mathcal A$ may be contained in a lower dimensional invariant subset of $\mathcal M_\infty$.
\hfill$\square$

\medskip

The above result implies that the asymptotic evolution of the Einstein--scalar cosmological system is governed by at most two effective dynamical degrees of freedom. Late--time scalar field cosmologies therefore exhibit a global dimensional reduction of the physical phase space, independently of the detailed functional form of the scalar field potential within the admissible class. The remaining case of strictly one dimensional asymptotic dynamics corresponds to further degeneracies in the reduced flow and is addressed in the following subsection.

\subsection{One Dimensional Asymptotic Dynamics and Exponential Potentials}
\label{sec:dimension:exp}

Within the reduced asymptotic phase space, a further reduction in the topological dimension of the invariant set governing the late--time dynamics can occur only if the effective scalar sector becomes dynamically rigid. Such rigidity arises when the scalar potential steepness observable approaches a constant asymptotic value, thereby reducing the number of independent dynamical degrees of freedom of the Einstein--scalar cosmological evolution. This situation is realized for scalar field potentials with asymptotically exponential behavior, which induce invariant relations in the scalar sector. The following theorem characterizes this case.

\begin{theorem}[Normal Contraction]\label{th:NormalHyperbolicity}
Let $\mathcal{A}$ denote the global attractor obtained above. Then the invariant manifold $\mathcal M_{\infty}=\{\Omega_{m}=0,\ \Omega_{r}=0\}$ is uniformly exponentially attracting in directions transverse to $\mathcal M_{\infty}$ along trajectories contained in $\mathcal{A}$.
\end{theorem}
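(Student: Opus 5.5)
The plan is to work in expansion-normalized variables with e-fold time $N=\ln a$. In these variables $\mathcal M_\infty$ is the zero set of the two coordinates $(\Omega_m,\Omega_r)$, which are exactly the directions transverse to $\mathcal M_\infty$. Write $\Omega_\phi = x^2 + y^2$, with $x^2=\dot\phi^2/(6H^2)$ the kinetic fraction, and let $w_{\rm eff}$ be the total effective equation of state. The conservation equations then give
\begin{equation*}
\frac{d\Omega_m}{dN} = 3\,\Omega_m\, w_{\rm eff}, \qquad \frac{d\Omega_r}{dN} = \Omega_r\,(3 w_{\rm eff}-1).
\end{equation*}
Both transverse coordinates therefore satisfy linear, non-autonomous equations whose coefficients depend continuously on the point of $\mathcal P$. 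As a consequence, $\mathcal M_\infty$ is invariant, and the transverse linearization along any trajectory in $\mathcal M_\infty$ is diagonal. On $\mathcal M_\infty$ one has $w_{\rm eff} = 2x^2-1$. The normal Lyapunov exponents along a trajectory $X(N)\subset\mathcal A\subset\mathcal M_\infty$ (Theorem~\ref{th:global_classification}) are therefore the time averages of
\begin{equation*}
3(2x^2-1) \qquad\text{and}\qquad 6x^2-4 .
\end{equation*}

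The key steps are as follows.
\begin{enumerate}
\item Use compactness of $\mathcal A$ (Theorem~\ref{th:existance of a Global Attractor}) and continuity of $x^2$ on $\mathcal P$ to define $\kappa := \sup_{\mathcal A} x^2$; the supremum is attained.
\item Show that $\kappa<1/2$. This yields the uniform bounds $d\Omega_m/dN \le -3(1-2\kappa)\,\Omega_m$ and $d\Omega_r/dN\le -(4-6\kappa)\,\Omega_r$, hence exponential contraction $\|(\Omega_m,\Omega_r)(N)\|\le e^{-\mu N}\|(\Omega_m,\Omega_r)(0)\|$ with $\mu = 3(1-2\kappa)>0$.
\item Extend this bound by continuity to a small neighbourhood of $\mathcal A$. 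This gives uniform transverse attraction along trajectories in $\mathcal A$, and also in a neighbourhood of it, which is what is needed later for normal hyperbolicity in Subsection~\ref{sec:nhim}.
\item Convert the rate from $N$ back to cosmic time if desired, using $dN=H\,dt$ and a positive lower bound on $H$ over the compact set $\mathcal A$.
\end{enumerate}
As a cross-check, the Lyapunov functional $\mathcal L$ of Eq.~(\ref{eq:Lprime}) already gives $d\mathcal L/dt \le -3H\,\mathcal L$ in the paper's normalization. That inequality yields the same contraction in a single step whenever it can be justified near $\mathcal M_\infty$.

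The main obstacle is the second step, the uniform bound $\kappa<1/2$, equivalently $w_{\rm eff}<0$ on $\mathcal A$. This bound is not automatic. The kinetic-dominated points $x^2=1$ lie in $\mathcal M_\infty$ and are transversally repelling, so the proof must show that they, and any recurrent set with large kinetic fraction, are excluded from $\mathcal A$.

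I would first use Theorem~\ref{th:forward_boundedness}, which confines $\lambda$ to $[-M,M]$ on $\mathcal A$. On the reduced system on $\mathcal M_\infty$,
\begin{equation*}
x' = -3x(1-x^2) + \sqrt{3/2}\,\lambda y^2 ,
\end{equation*}
so the kinetic points are sources within $\mathcal M_\infty$ whenever $|\lambda|<\sqrt 6$. Since a compact global attractor contains no trajectory that stays near a source for all backward time except along its unstable set, I would aim to show that the invariant sets carried by $\mathcal A$ are the scalar-dominated equilibria with $x^2=\lambda^2/6$ and their connecting orbits. This would give $\kappa \le M^2/6$. It then remains to require $M<\sqrt3$, either by sharpening the bound of Theorem~\ref{th:forward_boundedness} to the asymptotic value $\lambda_\infty$ via (A5)--(A6), or by taking it as the admissibility condition implicit in the statement.

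If this route fails, the fallback is to establish the transverse contraction directly from the differential inequality for $\mathcal L$. In that case the constant $\mu$ would be read off from the lower bound on $H$ over $\mathcal A$.
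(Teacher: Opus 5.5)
There is a genuine gap, and it is your Step~2. Your reduction is correct, and sharper than the paper's. With $d\Omega_m/dN=3w_{\mathrm{eff}}\Omega_m$ and $d\Omega_r/dN=(3w_{\mathrm{eff}}-1)\Omega_r$, the entire content of the theorem is the bound $\kappa=\sup_{\mathcal A}x^2<1/2$. The paper's proof skips exactly this point: it passes from boundedness of $f_m,f_r$ on $\mathcal A$ to $f_m\le\varepsilon H$, which does not follow. In fact $f_m=f_r=3H(1+w_{\mathrm{eff}})$.

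Step~2 is not merely unproved; it is false for the global attractor, and your mechanism for it is wrong. A global attractor is the maximal compact invariant set. If $e$ is an equilibrium, then $\operatorname{dist}(e,\mathcal A)=\operatorname{dist}(\Phi_t(e),\mathcal A)\to0$, so $e\in\mathcal A$; likewise every bounded complete orbit lies in $\mathcal A$. Sources are therefore not excluded; they are where the unstable sets making up $\mathcal A$ originate.

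Now consider the kinetic points $x=\pm1$, $y=0$. For the exponential potential, which the paper explicitly admits, each of them is an equilibrium. They lie in $\mathcal M_\infty$, and they belong to the closed state space the paper needs for its compact absorbing set, so they lie in $\mathcal A$. There $d\Omega_m/dN=3\Omega_m$ and $d\Omega_r/dN=2\Omega_r$, so $\mathcal M_\infty$ is transversally repelling along part of $\mathcal A$.

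Even your intended picture of $\mathcal A$ (scalar-dominated equilibria plus connecting orbits) gives $\kappa=1$. The orbits connecting into $x^2=\lambda^2/6$ come out of the kinetic sources and sweep through all $x^2\in(\lambda^2/6,1)$; the bound $\kappa\le M^2/6$ only sees the endpoints. Finally, $M<\sqrt3$ is an extra hypothesis: Theorem~\ref{th:forward_boundedness} gives only some finite $M$.

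Neither fallback repairs this. Eq.~(\ref{eq:Lprime}) drops the $-2(\dot H/H)\Omega$ contributions to $\dot\Omega_m$ and $\dot\Omega_r$. Your own equations give $d\mathcal L/dN=3w_{\mathrm{eff}}\Omega_m+\tfrac43(3w_{\mathrm{eff}}-1)\Omega_r$, which is strictly positive whenever $w_{\mathrm{eff}}>1/3$ and $\mathcal L>0$. So $d\mathcal L/dt\le-3H\mathcal L$ is unavailable precisely near the kinetic part of $\mathcal M_\infty$ that causes the trouble.

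Step~4 also fails. $H$ has no positive lower bound along late-time solutions; on the scalar-dominated power-law solution $H=p/t$. Hence $e^{-\mu N}=a^{-\mu}$ is only power-law decay in cosmic time, not exponential.

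What your computation does prove is a weaker, correct statement. Along any compact invariant $K\subset\mathcal M_\infty$ with $\kappa_K:=\max_K x^2<1/2$, the transverse directions contract uniformly in $N$ at rate $3(1-2\kappa_K)$. To reach the theorem you would have to replace $\mathcal A$ by such a $K$ (for instance the attracting scalar-dominated equilibrium). You would also need a hypothesis guaranteeing $\kappa_K<1/2$, such as $\lambda^2<3$ on $K$, and (A1)--(A6) do not provide one.
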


\medskip

\noindent\textit{Proof.}
Linearizing the Einstein--scalar evolution in directions transverse to $\mathcal M_{\infty}$ yields
\[
\frac{d}{dt}\Omega_{m}=-3H\,\Omega_{m}+f_{m}\,\Omega_{m},\qquad
\frac{d}{dt}\Omega_{r}=-4H\,\Omega_{r}+f_{r}\,\Omega_{r},
\]
where the functions $f_{m}$ and $f_{r}$ depend on the scalar configuration and remain uniformly bounded along trajectories contained in the compact attractor $\mathcal{A}$. Hence there exists $\varepsilon>0$ such that
\[
\frac{d}{dt}\Omega_{m}\le-(3-\varepsilon)H\,\Omega_{m},\qquad
\frac{d}{dt}\Omega_{r}\le-(4-\varepsilon)H\,\Omega_{r}
\]
along trajectories contained in $\mathcal{A}$. Since $H(t)>0$ for expanding solutions, these inequalities imply uniform exponential contraction in the normal directions along forward trajectories on $\mathcal{A}$. The reduced dynamics tangent to $\mathcal M_{\infty}$ evolves on a compact invariant manifold and therefore admits at most bounded growth. Hence the dynamics admits uniform exponential contraction in directions transverse to $\mathcal M_{\infty}$ along trajectories contained in $\mathcal{A}$.
\hfill$\square$

\subsection{Normal Hyperbolicity and Structural Stability}
\label{sec:nhim}

Beyond dimensional considerations, it is necessary to analyze the stability of the invariant set governing the asymptotic dynamics with respect to perturbations transverse to the invariant manifold $\mathcal M_\infty$. In particular, the persistence of late--time cosmological behavior under small deformations of the scalar field potential depends on the existence of a dominated splitting between the tangential and normal dynamics.

\begin{corollary}\textbf{(Normal Hyperbolicity of the Asymptotic Dynamics):}
\label{th:NHIM}
Let $V(\phi)$ satisfy \textnormal{(A1)--(A6)} and assume that $\lambda_\infty$ exists and is finite. Then the invariant set $\mathcal A$ governing the late--time cosmological evolution is normally hyperbolic relative to the invariant manifold
$\mathcal M_\infty$.
\end{corollary}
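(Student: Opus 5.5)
The plan is to derive the corollary from Theorem~\ref{th:NormalHyperbolicity}. What remains is a dominated splitting: the tangential growth rates along $\mathcal M_\infty$ must be strictly weaker than the normal contraction rates. I will work on the compact absorbing set $B$ of Section~\ref{sec:Asymptotic_Compactness}. Its closure in the compactified steepness coordinate $\widetilde{\mathcal S}$ is compact by the Steepness Compactification Lemma, so all coefficient functions of the linearized flow are uniformly bounded there. I will also switch to the e-fold time $N=\ln a$, with $dN = H\,dt$. Along expanding solutions this turns the estimates of Theorem~\ref{th:NormalHyperbolicity} into autonomous, $H$-independent rates: the normal Lyapunov exponents are bounded above by $-(3-\varepsilon)$ for $\Omega_m$ and $-(4-\varepsilon)$ for $\Omega_r$.

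\textbf{Splitting.} At each point of $\mathcal A\subset\mathcal M_\infty$ (the inclusion is Theorem~\ref{th:global_classification}), I will split the tangent space as $T\mathcal M_\infty\oplus N$. Here $N$ is spanned by the $\Omega_m$ and $\Omega_r$ directions. The splitting is invariant under the linearized flow because $\Omega_m$ and $\Omega_r$ evolve multiplicatively: their equations are $\Omega_m' = \Omega_m(\cdots)$ and $\Omega_r'=\Omega_r(\cdots)$, so the hyperplanes $\{\Omega_m=0\}$ and $\{\Omega_r=0\}$ are invariant. Theorem~\ref{th:NormalHyperbolicity} supplies the normal contraction bound $\|D\Phi_N|_N\|\le Ce^{-(3-\varepsilon)N}$.

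\textbf{Tangential rates.} I will use the variables $(x,\lambda)$ on $\mathcal M_\infty$, with $x=\dot\phi/(\sqrt6 H)$ and $\Omega_\phi = x^2 + y^2 = 1$. The reduced system is then
\[
x' = (x^2-1)\Big(3x-\sqrt{\tfrac32}\,\lambda\Big),\qquad \lambda' = -\sqrt6\,\lambda^2\big(\Gamma(\lambda)-1\big)x .
\]
By (A4) this is $C^1$. The hypothesis that $\lambda_\infty$ is finite, combined with Theorem~\ref{th:forward_boundedness}, confines $\lambda$ to a compact interval. On this set I will bound the tangential Jacobian, giving $\|D\Phi_{\pm N}|_{T\mathcal M_\infty}\|\le Ce^{\mu|N|}$ for an explicit $\mu$. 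I expect $\mu$ to be controlled by $\max|3x^2-1+\cdots|$ and by $\sup\big|\partial_\lambda(\lambda^2(\Gamma-1))\big|$ over the compact $\lambda$-range.

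\textbf{Main obstacle: the gap $\mu<3-\varepsilon$.} Proving this inequality is the hard part. A crude Jacobian bound need not beat $3$; near the kinetic points $x=\pm1$, for example, the $x$-direction rate reaches $6\mp\sqrt6\,\lambda$. I would first restrict to the actual dynamics on $\mathcal A$. By the Lyapunov functional of Section~\ref{sec:invariant} and LaSalle, the recurrent part of $\mathcal A$ consists of equilibria and connecting orbits, so the relevant rates are the Lyapunov exponents at equilibria. At those points I would compare eigenvalues directly. At the scalar-field-dominated point, the tangential eigenvalues are $(\lambda_\infty^2-6)/2$ and $-\lambda_\infty\Gamma'$-type terms, and the normal ones are $\lambda_\infty^2-3$ and $\lambda_\infty^2-4$. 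Finiteness of $\lambda_\infty$ gives the required separation of rates.

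If the uniform version fails, I would fall back on the weaker notion that the paper's statement admits: normal hyperbolicity "relative to $\mathcal M_\infty$," understood as uniform normal contraction from Theorem~\ref{th:NormalHyperbolicity} together with tangential rates bounded on the compact set. In that reading, finiteness of $\lambda_\infty$ is used only to guarantee this tangential boundedness.
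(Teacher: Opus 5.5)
\textbf{Verdict: genuine gap.} You correctly identify that normal hyperbolicity needs the normal contraction to dominate the tangential contraction. But the step you propose for this fails, and your own eigenvalues show it.

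\textbf{The rate gap fails at the scalar-field-dominated equilibrium.} For an attracting splitting $T\mathcal M_\infty\oplus N$, every normal exponent must lie strictly below every tangential one. At this equilibrium you list the tangential eigenvalue $\tfrac12(\lambda_\infty^2-6)$ and the normal eigenvalue $\lambda_\infty^2-3$ in the $\Omega_m$ direction. Their difference is $(\lambda_\infty^2-3)-\tfrac12(\lambda_\infty^2-6)=\tfrac12\lambda_\infty^2\ge 0$. So the matter direction is never more strongly contracted than the tangential $x$-direction. The two rates tie at $\lambda_\infty=0$, and for $\lambda_\infty^2>3$ the normal direction actually expands. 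Finiteness of $\lambda_\infty$ cannot produce the separation, because the inequality has the wrong sign for every value. Suppose you use the imported bound $-(3-\varepsilon)$ from Theorem~\ref{th:NormalHyperbolicity} instead. The gap $\tfrac12(6-\lambda_\infty^2)<3-\varepsilon$ still requires $\lambda_\infty^2>2\varepsilon$. That fails for $\lambda_\infty=0$, which is the inverse power law case.

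\textbf{Your eigenvalues also contradict the imported bound.} On $\mathcal M_\infty$ the normal exponent in e-fold time is exactly $\Omega_m'/\Omega_m=6x^2-3$, and $6x^2-4$ for $\Omega_r$. Boundedness of $f_m$ in Theorem~\ref{th:NormalHyperbolicity} does not make $f_m$ small compared with $H$. This exponent equals $+3$ at the kinetic equilibria $x=\pm1$ of your reduced system. So the real obstruction there is normal expansion, not the forward tangential growth $6\mp\sqrt6\,\lambda$, which is harmless for an attracting splitting.

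\textbf{The splitting is not invariant for the reason you give.} Invariance of $\{\Omega_m=0\}$ and $\{\Omega_r=0\}$ only makes $T\mathcal M_\infty$ invariant. The coordinate complement spanned by $\partial_{\Omega_m},\partial_{\Omega_r}$ is not invariant, since $x'$ depends on $\Omega_m$. An invariant normal bundle has to be constructed, and that construction is exactly where the rate gap is used.

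\textbf{The fallback is the paper's argument, with the same gap.} Your fallback, ``uniform normal contraction plus bounded tangential rates,'' simply drops the domination requirement. The paper's own proof does the same thing. It passes from Theorem~\ref{th:NormalHyperbolicity} and compactness of $\mathcal A$, via ``continuity of the linearized flow,'' to a dominated splitting, without ever comparing normal and tangential rates. In that weakened sense, normal hyperbolicity no longer gives the $C^1$ persistence and Conley-index continuation used later. Neither of your routes establishes normal hyperbolicity in the Fenichel/Hirsch--Pugh--Shub sense.
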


\medskip

\noindent\textit{Proof.}
By Theorem~\ref{th:NormalHyperbolicity}, the invariant manifold $\mathcal M_\infty$ admits uniform exponential contraction in directions transverse to $\mathcal M_\infty$
along trajectories contained in $\mathcal A$. Since Theorem~\ref{th:Topological_dimension} implies that $\mathcal A\subset\mathcal M_\infty$, the dynamics tangent to $\mathcal M_\infty$ evolves on a compact invariant subset.

By compactness of $\mathcal A$ and continuity of the linearized flow, this transverse contraction extends to a sufficiently small neighborhood of $\mathcal A$, yielding a dominated splitting between directions tangent and normal to $\mathcal M_\infty$. Again by the definition of normal hyperbolicity for invariant sets relative to an invariant manifold, this implies that the invariant set $\mathcal A$ is normally hyperbolic relative to $\mathcal M_\infty$.
\hfill$\square$

Normal hyperbolicity implies that the invariant set governing the asymptotic cosmological dynamics persists under sufficiently small $C^1$ perturbations of the underlying vector field. In particular, invariant subsets contained in $\mathcal A$ persist under smooth deformations of the scalar field potential satisfying assumptions (A1)--(A6) that induce $C^1$–small perturbations of the Einstein--scalar evolution.

\subsection{Topological Classification via the Conley Index}
\label{sec:conley}

In this subsection, the Conley index of the invariant set governing the asymptotic cosmological dynamics is determined. This construction provides a topological invariant that characterizes the late--time behavior of the Einstein--scalar system independently of smooth deformations of the underlying evolution.

\begin{theorem}\textbf{(Topological Classification of Asymptotic Dynamics):} 
Let $\mathcal A$ denote the invariant set governing the asymptotic cosmological dynamics under assumptions \textnormal{(A1)--(A6)} and suppose that the normal hyperbolicity established in Corollary~\ref{th:NHIM} holds in a sufficiently small neighborhood of $\mathcal A$. Then $\mathcal A$ is an isolated invariant set and its Conley index satisfies $h(\mathcal A) \simeq \Sigma^k$, where $k=\dim_{\mathrm{top}}(\mathcal A)$. In particular,
\begin{enumerate}
\item $h(\mathcal A) \simeq \Sigma^1$ for asymptotically exponential potentials,
\item $h(\mathcal A) \simeq \Sigma^2$ when $\dim_{\mathrm{top}}(\mathcal A)=2$.
\end{enumerate}
\end{theorem}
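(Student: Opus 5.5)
The plan is to reduce the computation to a product formula for the Conley index of a normally hyperbolic set. First I isolate $\mathcal A$, then I split an index pair into a normal factor and a tangential factor, and finally I compute each factor separately.

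\textbf{Isolation.} By Theorem~\ref{th:existance of a Global Attractor}, $\mathcal A$ is compact and invariant. By Theorem~\ref{th:global_classification} and Theorem~\ref{th:Topological_dimension}, $\mathcal A\subset\mathcal M_\infty$ and $\dim_{\mathrm{top}}\mathcal A\le 2$. I take a compact neighbourhood $N$ of $\mathcal A$ inside the compact absorbing set $B$ of Section~\ref{sec:Asymptotic_Compactness}. Because $\mathcal A$ attracts every bounded set, any full orbit lying in $N$ must lie in $\mathcal A$. This gives $\mathrm{Inv}(N)=\mathcal A$, so $\mathcal A$ is an isolated invariant set.

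\textbf{Index pair and product structure.} Using Corollary~\ref{th:NHIM}, I choose tubular coordinates $(y,n)$ near $\mathcal A$. Here $y\in\mathcal M_\infty$ and $n=(\Omega_m,\Omega_r)$ is the normal coordinate. The Lyapunov functional $\mathcal L$ of Eq.~(\ref{eq:L}) decreases strictly off $\mathcal M_\infty$, and Theorem~\ref{th:NormalHyperbolicity} gives uniform normal contraction. So a sublevel set $\{\mathcal L\le\delta\}$ is a normal slab that no orbit exits. I then take $N=N_{\tan}\times\{\mathcal L\le\delta\}$ with exit set $L=L_{\tan}\times\{\mathcal L\le\delta\}$, where $(N_{\tan},L_{\tan})$ is an index pair for $\mathcal A$ under the flow restricted to $\mathcal M_\infty$. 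The standard product (smash) formula, together with the fact that a stable normal bundle of rank $2$ contributes $\Sigma^0$, gives
\begin{equation*}
h(\mathcal A)\simeq h_{\mathcal M_\infty}(\mathcal A)\wedge\Sigma^0\simeq h_{\mathcal M_\infty}(\mathcal A).
\end{equation*}
This reduces the problem to the reduced two-dimensional flow on $\mathcal M_\infty$ in the variables $(x,\widetilde{\mathcal S})$, where $x$ is the normalized kinetic variable with $\Omega_\phi=1$.

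\textbf{Tangential computation.} I would treat the two cases separately.
\begin{enumerate}
\item For asymptotically exponential potentials, the steepness is fixed at $\lambda_\infty$, so $\Gamma=1$ and $\widetilde{\mathcal S}$ is frozen. The reduced set $\mathcal A$ is then an arc of the kinetic circle. I would use (A5)--(A6) to identify the exit set in the one remaining direction and read off $\Sigma^1$.
\item When $\dim_{\mathrm{top}}\mathcal A=2$, I would take $N_{\tan}$ to be a $2$-disc whose boundary is the exit set, which gives $D^2/\partial D^2\simeq\Sigma^2$.
\end{enumerate}

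\textbf{Main obstacle.} The main obstacle is exactly this last step. The claim $h(\mathcal A)\simeq\Sigma^k$ requires the whole boundary of $N_{\tan}$ to be an exit set, i.e.\ $\mathcal A$ must be repelling in $k$ tangential directions. For a global attractor the exit set is instead empty, which would give $h(\mathcal A)\simeq\mathcal A^{+}$, typically $\Sigma^0$ when $\mathcal A$ is contractible.

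What I would try first is to reinterpret the statement as the index of $\mathcal A$ relative to the boundary $\widetilde{\mathcal S}=1$ of the compactified phase space from the Steepness Compactification Lemma. I would then check, using the sign of $d\mathcal E/dt$ from the proof of Theorem~\ref{th:forward_boundedness}, whether orbits leave a neighbourhood of that boundary along $k$ directions. If that check fails, the argument only yields $h(\mathcal A)\simeq\mathcal A^{+}$. In that case the dimension-indexed classification would have to be restated as a statement about unstable dimensions inside $\mathcal M_\infty$.
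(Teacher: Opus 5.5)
\textbf{Where the proposal stands.} Your isolation step is correct, and sounder than the paper's. The paper asserts that normal hyperbolicity relative to $\mathcal M_\infty$ makes $\mathcal A$ locally maximal, but normal contraction says nothing about isolation \emph{inside} $\mathcal M_\infty$. Your argument does work: a global attractor contains every bounded complete orbit, so $\operatorname{Inv}(N)=\mathcal A$. Given the paper's normal-contraction theorem, your reduction $h(\mathcal A)\simeq h_{\mathcal M_\infty}(\mathcal A)$ is also fine.

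\textbf{The gap, and why it cannot be closed.} The gap is the tangential computation, and you diagnosed it correctly: it cannot be carried out, because the conclusion is false for this set. Since $\mathcal A$ attracts a compact neighbourhood $U$ of itself, $N=\overline{\bigcup_{t\ge0}\Phi_t(U)}$ is a compact, positively invariant isolating neighbourhood. So $(N,\emptyset)$ is an index pair, and $h(\mathcal A)$ is the pointed space $N\sqcup\{*\}$. Hence $\tilde H_0(h(\mathcal A))\cong H_0(N)\neq0$, whereas $\tilde H_0(\Sigma^k)=0$ for every $k\ge1$. Neither item~1 nor item~2 can hold. More generally, the Conley cohomology of an attractor is the \v{C}ech cohomology of the attractor itself, which is nonzero in degree $0$.

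\textbf{Why your two sub-steps fail.} In the exponential case the remaining direction is the kinetic one, about which (A5)--(A6) say nothing. Those assumptions only control the steepness direction, and there they push orbits back toward $\lambda_\infty$, which is entrance behaviour, not exit. In the two-dimensional case, no neighbourhood of an attractor in $\mathcal M_\infty$ can have its whole boundary as exit set, since that would make $\mathcal A$ a repeller.

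\textbf{Wrong notion of dimension.} The formula also confuses topological dimension with unstable dimension. For a hyperbolic set, the exponent in $\Sigma^k$ counts unstable directions. A hyperbolic source of a planar flow is a point but has index $\Sigma^2$. An interior hyperbolic source of a one-dimensional flow has index $\Sigma^1$. A contractible attractor such as a closed arc has index $\Sigma^0$.

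\textbf{The paper's proof does not fill the gap.} After isolation it invokes the continuation theorem (Rybakowski; Mischaikow--Mrozek) and writes ``Hence $h(\mathcal A)\simeq\Sigma^k$''. Continuation only shows that the index is unchanged along a perturbation; it never computes the index, and no exit set is ever exhibited.

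\textbf{Your proposed repairs.} The rescue via the boundary $\widetilde{\mathcal S}=1$ cannot change the answer. The Conley index depends only on the flow inside an isolating neighbourhood, and since $\mathcal A\subset B=\{\mathcal S\le M\}$ you may take $N$ disjoint from that boundary. The viable repair is your fallback: classify the Morse sets of a Morse decomposition of $\mathcal A$ inside $\mathcal M_\infty$ by unstable dimension. For example, a hyperbolic equilibrium with $u$-dimensional unstable manifold has index $\Sigma^u$, while $h(\mathcal A)$ itself has the cohomology of $\mathcal A\sqcup\{*\}$. That is a different theorem from the one stated; as written, the statement cannot be proved by your route or any other.
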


\medskip

\noindent\textit{Proof.}
By Corollary~\ref{th:NHIM}, the invariant set $\mathcal A\subset\mathcal M_\infty$ is normally hyperbolic relative to the invariant manifold $\mathcal M_\infty$. In particular, normal hyperbolicity implies that $\mathcal A$ is locally maximal and therefore admits an isolating neighborhood $N$, i.e., $\mathcal A=\operatorname{Inv}(N)$ where $\operatorname{Inv}(N)$ denotes the maximal invariant subset of $N$. Hence $\mathcal A$ is an isolated invariant set in the sense of Conley.

Let $(N,L)$ be an index pair associated with $\mathcal A$. Since normal hyperbolicity holds uniformly in a neighborhood of $\mathcal A$, the invariant set $\mathcal A$ persists as an isolated invariant set within the same isolating neighborhood $N$ under sufficiently small $C^{1}$ perturbations of the vector field. By the continuation property of the Conley index (see Theorem~1.12 of Rybakowski~\cite{Rybakowski1987} or the continuation theorem of Mischaikow and
Mrozek~\cite{MischaikowMrozek1995}), its homotopy index is invariant under such
perturbations. Hence, $h(\mathcal A)\simeq\Sigma^{k}$. For asymptotically exponential potentials, $k=1$, while in the case $\dim_{\mathrm{top}}(\mathcal A)=2$, one has
$k=2$.
\hfill$\square$

The above result implies that late--time scalar field cosmologies within the admissible class fall into one of two topologically distinct universality classes, corresponding to one and two dimensional reduced asymptotic dynamics. Since the Conley index is invariant under continuation, this classification persists under sufficiently small $C^1$ perturbations of the scalar field potential satisfying assumptions \textnormal{(A1)--(A6)}. Consequently, the qualitative late--time behavior of the cosmological system is determined by global topological features of the Einstein--scalar flow and is robust with respect to model variations within this perturbative class.

\section{Structure and Robustness of the Asymptotic Dynamics}
\label{sec:structure}

The invariant set $\mathcal{A}$ identified in the preceding sections governs the qualitative late--time behavior of the Einstein--scalar cosmological evolution. Its internal structure reflects the asymptotic reduction of the cosmological dynamics induced by the forward boundedness of the scalar sector. In general, $\mathcal{A}$ need not be a smooth manifold and may consist of a finite union of invariant subsets of varying dimension. The invariant subsets comprising $\mathcal{A}$ include equilibrium configurations corresponding to asymptotically self--similar or stationary solutions, as well as higher dimensional invariant sets associated with persistent dynamical relations between the scalar field and the remaining matter components. These subsets are dynamically distinguished as invariant sets on which the Lyapunov functional introduced in Section~\ref{sec:invariant} is constant. Their stability properties determine the asymptotic prevalence of different dynamical regimes within the physically admissible phase space. Within this framework, late--time cosmological behavior is determined by invariant dynamical features of the global Einstein--scalar flow. The classification of asymptotic regimes is therefore reduced to the analysis of invariant subsets of $\mathcal{A}$ and becomes independent of detailed functional properties of the scalar field potential within the admissible class.

An important consequence of the global dynamical structure established above is the robustness of the asymptotic behavior under perturbations of the scalar field potential. Let $V(\phi)$ satisfy assumptions \emph{(A1)}--\emph{(A6)}, and let $\mathcal{A}$ denote the associated invariant set governing the late--time evolution. If $\widetilde V(\phi)$ is a potential satisfying the same assumptions and sufficiently close to $V$ in the $C^2(\mathbb{R})$ topology, then the corresponding Einstein--scalar evolution admits a compact invariant set $\widetilde{\mathcal{A}}$ in a neighborhood of $\mathcal{A}$ governing the perturbed asymptotic dynamics.

Assumptions \emph{(A1)}--\emph{(A6)} ensure that both systems generate forward complete flows on a common physical phase space and that the associated vector fields depend continuously on the potential in the $C^1$ topology. By the normal hyperbolicity of $\mathcal{A}$ established in Section~\ref{sec:nhim}, persistence results for normally hyperbolic invariant sets imply persistence of invariant subsets contained in $\mathcal{A}$ under sufficiently small perturbations of the vector field. In particular, the family of invariant sets is upper semicontinuous with respect to such perturbations in the Hausdorff metric, and for sufficiently small perturbations $\widetilde{\mathcal{A}}$ remains close to $\mathcal{A}$ while preserving its qualitative dynamical features.

As a consequence, the invariant subsets comprising the asymptotic dynamics persist under small deformations of the scalar field potential. The qualitative late--time behavior determined by invariant subsets of $\mathcal{A}$ is therefore structurally robust within the class of scalar field models satisfying assumptions \textnormal{(A1)--(A6)}.

\section{Relation to Cosmic No--Hair and Inflationary Attractors}
\label{sec:nohair}

The global classification results established in this work provide a dynamical framework within which several well--known asymptotic phenomena in cosmology, including cosmic no--hair–type behavior and inflationary attractors, may be interpreted in terms of the invariant structure of the Einstein--scalar flow. In particular, these results yield a formulation of late--time scalar field cosmology in which asymptotic behavior is determined by invariant subsets of the physical phase space.

Cosmic no--hair theorems assert that, under suitable conditions, expanding cosmological solutions approach asymptotically homogeneous states at late times, typically driven by an effective cosmological constant or scalar field potential. Within the present framework, such behavior corresponds to convergence toward compact invariant subsets of the invariant set $\mathcal{A}$ associated with scalar–field dominated regimes. The forward boundedness of the scalar sector established in Section~\ref{sec:Asymptotic_Compactness} implies that late--time cosmological trajectories are dynamically confined to a compact region of the physical phase space. Consequently, convergence toward scalar–field dominated configurations arises as a global property of the Einstein--scalar evolution.

Inflationary attractors commonly identified in scalar field cosmologies admit a similar interpretation. These attractors correspond to invariant subsets of $\mathcal{A}$ characterized by accelerated expansion, where the effective equation of state parameter satisfies $\omega_{\mathrm{eff}} < -1/3$. In the geometric formulation adopted here, such subsets arise as equilibrium configurations or lower dimensional invariant sets governing the asymptotic behavior of a large class of solutions. Inflationary dynamics is therefore determined by the global invariant structure of the cosmological flow within the admissible class of scalar field potentials. The above results provide a dynamical mechanism by which scalar field cosmologies evolve toward asymptotically homogeneous or accelerated regimes independently of detailed functional properties of the scalar field potential. Late--time acceleration, persistent scaling behavior, and scalar–field domination are governed by invariant subsets of $\mathcal{A}$ and arise as intrinsic features of the Einstein--scalar dynamical system.

For scalar field cosmologies satisfying assumptions \textnormal{(A1)--(A6)}, cosmic no--hair–type behavior and inflationary attractors may therefore be interpreted as manifestations of convergence toward compact invariant subsets in the physical phase space. The late--time evolution of physically admissible cosmological solutions is determined by invariant structures of the Einstein--scalar flow and is robust under perturbations of the scalar field potential within the admissible class.

\subsection{Connection to Dark Energy Models}

Scalar field dark energy models such as quintessence, k-essence, and tracker models are typically characterized by specific choices of scalar field potentials. Within the framework developed in this work, such models correspond to particular invariant subsets of the global attractor governing the Einstein–scalar evolution.

For quintessence models, late–time acceleration occurs when the scalar field dominates the energy density and the effective equation of state satisfies $\omega_{\mathrm{eff}} < -\frac{1}{3}$. In the present framework, this corresponds to invariant subsets of the asymptotic attractor contained within the scalar–field dominated manifold $M_\infty$. The forward boundedness of the scalar potential steepness ensures that cosmological trajectories evolve toward this manifold, independently of initial conditions. Tracking dark energy models correspond to invariant subsets in which the scalar field evolves in proportion to the background matter density. These solutions appear as scaling invariant sets within the global attractor. Consequently, the classification developed in this work provides a unified dynamical framework for understanding late–time dark energy behavior across a broad class of scalar field potentials.

\subsection{Comparison with $\Lambda$CDM Attractor}

In the $\Lambda$CDM cosmological model, late–time dynamics is governed by a cosmological constant $\Lambda$, leading to de Sitter expansion. This behavior corresponds to an attractor solution in which $\Omega_\Lambda = 1$. Within the present scalar field framework, an analogous attractor arises when the scalar potential approaches a constant asymptotic value. In this case, the scalar field behaves effectively as a cosmological constant and the cosmological evolution approaches de Sitter expansion. However, unlike $\Lambda$CDM, the present framework allows for more general asymptotic behavior, including scaling solutions and evolving dark energy equations of state. These behaviors arise as invariant subsets of the global attractor rather than isolated equilibrium points. Thus, the $\Lambda$CDM model appears as a special case within the broader class of scalar field cosmologies governed by the global dynamical trapping mechanism established in this work.

\section{Extensions and Generalizations}
\label{sec:extensions}

The dynamical mechanism underlying the asymptotic classification established in this work is not restricted to canonical single field cosmological models. Rather, it applies more broadly to cosmological systems whose evolution equations can be formulated as constrained dynamical flows on physical phase spaces and whose scalar sectors admit forward boundedness in regimes corresponding to steep effective potentials.

Non--canonical scalar field models, including $k$--essence theories and scalar fields with nontrivial kinetic couplings, may exhibit analogous asymptotic trapping provided that the field equations admit a formulation in which the associated physical phase space is bounded and that an appropriate Lyapunov functional governs the evolution of the matter and radiation components. When such a formulation exists and the induced flow dynamically restricts the evolution of generalized potential steepness observables, late--time cosmological trajectories are confined to invariant subsets of a compact region of the physical phase space. The framework may also extend to models involving multiple interacting scalar fields. If the scalar field sector is finite dimensional and endowed with a smooth field space metric, the cosmological evolution may be viewed as a flow on an extended phase space incorporating generalized steepness parameters associated with the interaction potential. Under suitable assumptions ensuring forward boundedness of these parameters, invariant sets analogous to those obtained in the single field case may be constructed. The resulting invariant sets then determine the asymptotic behavior of the multi field dynamics.

Modified gravity theories that admit a reformulation in the Einstein frame likewise fall within the scope of the present approach, provided that the resulting system satisfies analogues of assumptions \textnormal{(A1)}--\textnormal{(A6)}. Examples include scalar--tensor theories and certain classes of higher curvature models in which additional gravitational degrees of freedom are represented by scalar fields coupled to matter through conformal transformations. In such cases, if the induced Einstein--frame system generates an evolution for which the generalized steepness observables remain forward bounded, the late--time cosmological dynamics is governed by invariant subsets of a compact region of the physical phase space.

These considerations indicate that the asymptotic trapping mechanism identified in this work is not tied to a specific cosmological model but rather to the existence of a dynamical restriction preventing runaway evolution in the effective scalar sector. Within this setting, late--time cosmological dynamics may be characterized in terms of invariant structures of the induced flow across a broad class of theories admitting a monotone structure and forward bounded scalar sector.

\section{Conclusion}
\label{sec:conclusion}

A global dynamical analysis of scalar field cosmologies within the class of Einstein--scalar systems considered in this work has been presented using tools from invariant manifold theory and dissipative dynamical systems. By formulating the cosmological evolution as a constrained flow on the physical phase space and imposing regularity and asymptotic conditions on the scalar field potential satisfying assumptions \textnormal{(A1)}--\textnormal{(A6)}, it has been shown that the scalar sector is dynamically confined to a forward bounded region of the physical state space. This forward boundedness establishes a dynamical obstruction to runaway evolution in the scalar potential steepness observable. In particular, although the scalar field potential may admit arbitrarily steep asymptotic regimes, the Einstein--scalar cosmological evolution does not admit forward trajectories along which the potential steepness becomes asymptotically unbounded. As a consequence, the induced cosmological flow admits a compact global attractor governing the late--time dynamics of all physically admissible cosmological solutions. The resulting asymptotic trapping mechanism implies that late--time cosmological behavior is determined by global geometric properties of the Einstein--scalar dynamical system rather than by local stability analyses or model specific parameter choices. Scalar field dominated and scaling regimes arise as invariant dynamical structures contained within the asymptotic attractor and may be interpreted within a generalized dynamical no--hair framework for the admissible class of scalar field potentials. Furthermore, the invariant set governing the asymptotic dynamics exhibits a reduction in effective dimensionality, implying that late--time scalar field cosmologies evolve toward configurations described by at most two independent dynamical degrees of freedom. For asymptotically exponential potentials, this reduction becomes complete and the asymptotic evolution is governed by a single effective scalar degree of freedom. The qualitative asymptotic behavior is shown to be robust under small deformations of the scalar field potential satisfying assumptions \textnormal{(A1)}--\textnormal{(A6)}, in the sense of persistence of invariant subsets under $C^1$ perturbations of the induced flow. These results suggest that global dynamical trapping and dimensional reduction represent generic features of scalar field cosmologies within the admissible class. Possible extensions include non--autonomous cosmological models, anisotropic spacetimes, multi field systems, and cosmological dynamics with stochastic or time dependent components.

\section*{Conflict of Interest}
The author declares that there are no known competing financial interests, personal relationships, or institutional affiliations that could have appeared to influence the work reported in this paper.

\section*{Data Availability}
This work is entirely theoretical in nature. No datasets were generated, collected, or analysed during the course of this study. All results are derived analytically from the mathematical framework presented in the manuscript.

\bibliographystyle{aipnum4-2}
\bibliography{sample}

\end{document}